\documentclass[conference]{IEEEtran}
\IEEEoverridecommandlockouts
\usepackage{amsmath,amssymb,amsfonts}
\usepackage{graphicx}
\usepackage{textcomp}
\usepackage{soul}
\usepackage{xcolor}
\usepackage{amsthm}
\usepackage{bm}
\usepackage{bbm}

\usepackage{graphicx}
\usepackage{subcaption}   
\usepackage{tikz}
\usetikzlibrary{positioning}
\usepackage{svg}
\usepackage{algorithm}
\usepackage{algpseudocode}
\usepackage{url}
\usepackage{comment}
\newtheorem{proposition}{Proposition}
\newtheorem{corollary}{Corollary}
\newtheorem{definition}{Definition}
\newtheorem{remark}{Remark}
\usepackage{dcolumn}
\usepackage{caption}
\usepackage[compatibility=false]{caption}
\newtheorem{theorem}{Theorem}

\newtheorem{example}{Example}
\usepackage{empheq}
\usepackage{amsmath,amssymb,amsfonts}
\usepackage{xcolor}
\usepackage[letterpaper,
            top=0.75in,
            bottom=0.75in,
            left=0.75in,
            right=0.75in,
            includeheadfoot]{geometry}
\begin{document}

\title{Structural Sign Herdability in Temporal Networks: A Sufficient Condition via $\pi_p$-Graphs\\
\author{
    Pradeep~M and Twinkle~Tripathy%
    \thanks{Pradeep~M is a Research Scholar and Twinkle~Tripathy is with the faculty of the Department of Electrical Engineering, Indian Institute of Technology Kanpur, Uttar Pradesh, India 208016. Pradeep~M is the corresponding author (email: \{pradeepm, ttripathy\}@iitk.ac.in).}%
}

}

\maketitle

\begin{abstract}

In this paper, we study the herdability of temporally switching directed networks. A temporal network is modeled as a switched system with a fixed switching sequence, which imposes more restrictive herdability conditions than those of conventional switched systems. By exploiting the relationship between temporal walks and the entries of the controllability matrix, we derive sufficient conditions for herdability in temporal networks. We further show that the magnitude of edge weights influences the sign pattern of the controllability matrix, thereby affecting herdability. Consequently, herdability in temporal networks depends not only on the network topology and switching durations, but also on the magnitude of the edge weights. Motivated by this observation, we establish equivalent graph-theoretic conditions for structural sign ($\mathcal{SS}$) herdability in temporal networks. In particular, we introduce the union multigraph of temporal subsystems and propose the notion of a $\pi$-graph. We show that the existence of a $\pi_p$-graph, which is a temporally evolving $\pi$-graph, is sufficient to guarantee $\mathcal{SS}$ herdability. Illustrative examples are provided to demonstrate the proposed results.
\end{abstract}

\begin{IEEEkeywords}
Herdability, temporally switching networks, structural sign herdability, Multigraph, switching system.
\end{IEEEkeywords}

\section{Introduction}\label{s1}
\label{sec:introduction}
\IEEEPARstart{H}{erdability}, a relaxed version of controllability, is the ability of the system to drive its state to the positive orthant of the state space in finite time. In most of the real-world applications, the desired states are often positive values \cite{ruf2018herdable}. For example, the simultaneous regulation of liquid levels in interconnected tank networks, control of chemical concentrations in reaction systems, coordination of robotic swarms, and population dynamics and opinion dynamics in networked systems sometimes share a common desired characteristic: the system state is required to reach or cross a threshold within the nonnegative orthant. In multi-agent coordination settings, this manifests as agents moving collectively in a common direction.  

One of the early contributions to the study of herdability is presented in \cite{ruf2018herdable}, where it was shown that input connectivity is a necessary condition for complete herdability. Subsequent works have examined herdability from the perspective of the network topology and sign properties. In \cite{meng2020leader}, the authors provide leader selection strategies for achieving herdability in structurally balanced signed graphs that were developed using graph walk-based analysis, and the results were further extended to weakly balanced signed graphs.

In \cite{she2019characterizing}, sufficient conditions were established to ensure the herdability of followers in a signed network using 1-walk and 2-walks from the leader node. Furthermore, authors in \cite{she2020characterizing} analyzed the controllable subspace of such systems using generalized equitable partitions.
Later,the authors in \cite{de2023herdability} investigated the herdability of clustering balanced directed graphs consisting of $k$ clusters and derived algebraic conditions for herdability in several graph structures, particularly tree graphs, under certain assumptions involving a single leader.

Recently, authors in \cite{xiang2026finite} studied optimal herdability control for hierarchical linear MAS using a finite horizon formulation. Game theoretic approaches were proposed in \cite{guo2025herdability,guo2026herdability}, where \cite{guo2025herdability} introduced a neighbour-based game control system with graph-theoretic herdability conditions, and \cite{guo2026herdability} proposed a Laplacian Game Control framework to analyze herdability using the herd graph approach.

 While the above analysis focuses on static networks, recently, the authors in \cite{shen2025herdability} analyzed the herdability in switching signed undirected networks using the union graph and graph partition. They also studied the herdability of simultaneously structurally balanced networks by studying the switching walk between the nodes. The authors showed that if the union graph $\bar{G}$, whose dynamics can be
 expressed as \(\dot{x}(t) = \bar{A} x(t) + \bar{B} \bar{u}(t),\label{eq:union_dynamics}
\) where \( \bar{A} = \sum_{k=1}^{N} A_k,\) \(\bar{B} = [B_1, \ldots, B_N] \in \mathbb{R}^{n \times D},
\quad
\bar{u}_{\delta(t)}(t) = [u_1(t), \ldots, u_D(t)]^\top \in \mathbb{R}^{D},D \triangleq \sum_{i=1}^{N} m_i
\) is herdable, then the switched network formed by the set of subsystems $(\mathcal{A}_i,\mathcal{B}_i)$ is herdable. But this is not always true in the case of a temporal network. In a temporal network, the switching sequence is predefined. For example, consider the switching network given in Fig.\ref{fig:three_row} whose dynamics are described by
\[
\dot{x}(t)=\mathcal{A}_{\delta(t)}x(t)+\mathcal{B}u(t),
\]
    where $\mathcal{A}_{\delta(t)} \in \mathbb{R}^{n\times n}$ denotes the adjacency matrix corresponding to the topology $\mathcal{G}_{\delta(t)}$. The switching signal $\delta(t)=i$ indicates that the subsystem $(\mathcal{A}_i,\mathcal{B})$ is active at time $t$. The corresponding adjacency matrices $\{\mathcal{A}_1,\mathcal{A}_2,\mathcal{A}_3\}$ are given, and the input matrix is $\mathcal{B}=[1~0~0~0]^{\top}$.

\begin{figure}[ht]

\begin{subfigure}{0.15\textwidth}
\centering
\tikzset{every picture/.style={scale=0.5pt}} 

\begin{tikzpicture}[x=0.75pt,y=0.75pt,yscale=-1,xscale=1]

\draw  [fill={rgb, 255:red, 0; green, 147; blue, 0 }  ,fill opacity=1 ][line width=0.75]  (116.29,62.2) .. controls (116.29,57.67) and (120.15,54) .. (124.91,54) .. controls (129.66,54) and (133.52,57.67) .. (133.52,62.2) .. controls (133.52,66.73) and (129.66,70.4) .. (124.91,70.4) .. controls (120.15,70.4) and (116.29,66.73) .. (116.29,62.2) -- cycle ;
\draw  [fill={rgb, 255:red, 204; green, 0; blue, 0 }  ,fill opacity=1 ][line width=0.75]  (54,124.55) .. controls (54,120.02) and (57.86,116.35) .. (62.61,116.35) .. controls (67.37,116.35) and (71.22,120.02) .. (71.22,124.55) .. controls (71.22,129.08) and (67.37,132.75) .. (62.61,132.75) .. controls (57.86,132.75) and (54,129.08) .. (54,124.55) -- cycle ;
\draw  [fill={rgb, 255:red, 204; green, 0; blue, 0 }  ,fill opacity=1 ][line width=0.75]  (171.78,126.2) .. controls (171.78,121.67) and (175.63,117.99) .. (180.39,117.99) .. controls (185.14,117.99) and (189,121.67) .. (189,126.2) .. controls (189,130.73) and (185.14,134.4) .. (180.39,134.4) .. controls (175.63,134.4) and (171.78,130.73) .. (171.78,126.2) -- cycle ;
\draw  [fill={rgb, 255:red, 204; green, 0; blue, 0 }  ,fill opacity=1 ][line width=0.75]  (113.21,188.8) .. controls (113.21,184.27) and (117.06,180.6) .. (121.82,180.6) .. controls (126.58,180.6) and (130.43,184.27) .. (130.43,188.8) .. controls (130.43,193.33) and (126.58,197) .. (121.82,197) .. controls (117.06,197) and (113.21,193.33) .. (113.21,188.8) -- cycle ;
\draw [line width=1.5]    (118.2,71.2) -- (73.89,113.63) ;
\draw [shift={(71,116.4)}, rotate = 316.24] [fill={rgb, 255:red, 0; green, 0; blue, 0 }  ][line width=0.08]  [draw opacity=0] (14.56,-6.99) -- (0,0) -- (14.56,6.99) -- (9.67,0) -- cycle    ;

\draw (119.6,26.6) node [anchor=north west][inner sep=0.75pt]    {$\mathbf{1}$};
\draw (31.2,115.8) node [anchor=north west][inner sep=0.75pt]    {$\mathbf{2}$};
\draw (200.4,115.4) node [anchor=north west][inner sep=0.75pt]    {$\mathbf{3}$};
\draw (118,210.6) node [anchor=north west][inner sep=0.75pt]    {$\mathbf{4}$};
\draw (59,68.4) node [anchor=north west][inner sep=0.75pt]    {$\mathbf{1}$};

\end{tikzpicture}
\caption{$\mathcal{G}_1(\mathcal{A}_1,\mathcal{B})$}
\end{subfigure}
\centering
\begin{subfigure}{0.15\textwidth}
\centering

\tikzset{every picture/.style={scale=0.5pt}} 
 
\begin{tikzpicture}[x=0.75pt,y=0.75pt,yscale=-1,xscale=1]

\draw  [fill={rgb, 255:red, 0; green, 147; blue, 0 }  ,fill opacity=1 ][line width=0.75]  (169.29,48.92) .. controls (169.29,44.39) and (173.15,40.72) .. (177.91,40.72) .. controls (182.66,40.72) and (186.52,44.39) .. (186.52,48.92) .. controls (186.52,53.45) and (182.66,57.12) .. (177.91,57.12) .. controls (173.15,57.12) and (169.29,53.45) .. (169.29,48.92) -- cycle ;
\draw  [fill={rgb, 255:red, 204; green, 0; blue, 0 }  ,fill opacity=1 ][line width=0.75]  (107,111.27) .. controls (107,106.74) and (110.86,103.07) .. (115.61,103.07) .. controls (120.37,103.07) and (124.22,106.74) .. (124.22,111.27) .. controls (124.22,115.8) and (120.37,119.47) .. (115.61,119.47) .. controls (110.86,119.47) and (107,115.8) .. (107,111.27) -- cycle ;
\draw  [fill={rgb, 255:red, 204; green, 0; blue, 0 }  ,fill opacity=1 ][line width=0.75]  (224.78,111.91) .. controls (224.78,107.39) and (228.63,103.71) .. (233.39,103.71) .. controls (238.14,103.71) and (242,107.39) .. (242,111.91) .. controls (242,116.44) and (238.14,120.12) .. (233.39,120.12) .. controls (228.63,120.12) and (224.78,116.44) .. (224.78,111.91) -- cycle ;
\draw  [fill={rgb, 255:red, 204; green, 0; blue, 0 }  ,fill opacity=1 ][line width=0.75]  (166.21,175.52) .. controls (166.21,170.99) and (170.06,167.32) .. (174.82,167.32) .. controls (179.58,167.32) and (183.43,170.99) .. (183.43,175.52) .. controls (183.43,180.05) and (179.58,183.72) .. (174.82,183.72) .. controls (170.06,183.72) and (166.21,180.05) .. (166.21,175.52) -- cycle ;
\draw [line width=1.5]    (221,111.52) -- (133,112.28) ;
\draw [shift={(129,112.32)}, rotate = 359.5] [fill={rgb, 255:red, 0; green, 0; blue, 0 }  ][line width=0.08]  [draw opacity=0] (14.56,-6.99) -- (0,0) -- (14.56,6.99) -- (9.67,0) -- cycle    ;
\draw [line width=1.5]    (223.81,124.78) -- (181.4,168.32) ;
\draw [shift={(226.6,121.92)}, rotate = 134.25] [fill={rgb, 255:red, 0; green, 0; blue, 0 }  ][line width=0.08]  [draw opacity=0] (14.56,-6.99) -- (0,0) -- (14.56,6.99) -- (9.67,0) -- cycle    ;

\draw (172.8,15.4) node [anchor=north west][inner sep=0.75pt]    {$\mathbf{1}$};
\draw (84.4,104.6) node [anchor=north west][inner sep=0.75pt]    {$\mathbf{2}$};
\draw (253.6,104.2) node [anchor=north west][inner sep=0.75pt]    {$\mathbf{3}$};
\draw (171.2,199.4) node [anchor=north west][inner sep=0.75pt]    {$\mathbf{4}$};
\draw (169,85.12) node [anchor=north west][inner sep=0.75pt]    {$1$};
\draw (210,147.12) node [anchor=north west][inner sep=0.75pt]    {$1$};

\end{tikzpicture}

\caption{$\mathcal{G}_2(\mathcal{A}_2,\mathcal{B})$}
\end{subfigure}
\begin{subfigure}{0.15\textwidth}
\centering

\tikzset{every picture/.style={scale=0.5pt}} 
 
\begin{tikzpicture}[x=0.75pt,y=0.75pt,yscale=-1,xscale=1]

\draw  [fill={rgb, 255:red, 0; green, 147; blue, 0 }  ,fill opacity=1 ][line width=0.75]  (165.29,58.52) .. controls (165.29,53.99) and (169.15,50.32) .. (173.91,50.32) .. controls (178.66,50.32) and (182.52,53.99) .. (182.52,58.52) .. controls (182.52,63.05) and (178.66,66.72) .. (173.91,66.72) .. controls (169.15,66.72) and (165.29,63.05) .. (165.29,58.52) -- cycle ;
\draw  [fill={rgb, 255:red, 204; green, 0; blue, 0 }  ,fill opacity=1 ][line width=0.75]  (103,119.87) .. controls (103,115.34) and (106.86,111.67) .. (111.61,111.67) .. controls (116.37,111.67) and (120.22,115.34) .. (120.22,119.87) .. controls (120.22,124.4) and (116.37,128.07) .. (111.61,128.07) .. controls (106.86,128.07) and (103,124.4) .. (103,119.87) -- cycle ;
\draw  [fill={rgb, 255:red, 204; green, 0; blue, 0 }  ,fill opacity=1 ][line width=0.75]  (220.78,121.51) .. controls (220.78,116.99) and (224.63,113.31) .. (229.39,113.31) .. controls (234.14,113.31) and (238,116.99) .. (238,121.51) .. controls (238,126.04) and (234.14,129.72) .. (229.39,129.72) .. controls (224.63,129.72) and (220.78,126.04) .. (220.78,121.51) -- cycle ;
\draw  [fill={rgb, 255:red, 204; green, 0; blue, 0 }  ,fill opacity=1 ][line width=0.75]  (162.21,184.12) .. controls (162.21,179.59) and (166.06,175.92) .. (170.82,175.92) .. controls (175.58,175.92) and (179.43,179.59) .. (179.43,184.12) .. controls (179.43,188.65) and (175.58,192.32) .. (170.82,192.32) .. controls (166.06,192.32) and (162.21,188.65) .. (162.21,184.12) -- cycle ;
\draw [line width=1.5]    (119.4,131.12) -- (157.13,173.34) ;
\draw [shift={(159.8,176.32)}, rotate = 228.21] [fill={rgb, 255:red, 0; green, 0; blue, 0 }  ][line width=0.08]  [draw opacity=0] (14.56,-6.99) -- (0,0) -- (14.56,6.99) -- (9.67,0) -- cycle    ;

\draw (166,21.4) node [anchor=north west][inner sep=0.75pt]    {$\mathbf{1}$};
\draw (77.6,110.6) node [anchor=north west][inner sep=0.75pt]    {$\mathbf{2}$};
\draw (246.8,110.2) node [anchor=north west][inner sep=0.75pt]    {$\mathbf{3}$};
\draw (164.4,205.4) node [anchor=north west][inner sep=0.75pt]    {$\mathbf{4}$};
\draw (95,153.72) node [anchor=north west][inner sep=0.75pt]    {$-1$};

\end{tikzpicture}

\caption{$\mathcal{G}_3(\mathcal{A}_3,\mathcal{B})$}
\end{subfigure}
\caption{Temporally switching network  $\{\mathcal{A}_1 \rightarrow \mathcal{A}_2 \rightarrow \mathcal{A}_3\}$}
\label{fig:three_row}
\end{figure}

In the sense of switching networks, the system can be made herdable by selecting the switching sequence $\{\mathcal{A}_1 \rightarrow \mathcal{A}_3 \rightarrow \mathcal{A}_2\}$. However, in a temporal network, the order of switching is fixed. In this example, the predefined sequence is  $\{\mathcal{A}_1 \rightarrow \mathcal{A}_2 \rightarrow \mathcal{A}_3\}$, under which the network is not herdable. This example illustrates that a switching system that is herdable under an arbitrary switching sequence may not necessarily be herdable in the temporal setting where the switching order is predefined. The following scenarios highlight the motivations for the study of herdability in temporally switching networks.


\begin{itemize}
\item The union graph approach in \cite{shen2025herdability} may indicate herdability for the temporally switching network in Fig.~\ref{fig:three_row}, although the given switching sequence is not herdable. This motivates a separate analysis for temporally switching networks.
\item When multiple paths exist from the leader to a node whether within a single snapshot or across time their contributions to the controllability matrix may carry opposing signs, potentially cancelling exactly and yielding a net-zero contribution. Consequently, input-connectivity alone does not guarantee $\mathcal{SS}$ herdability, motivating a search for structural properties of temporal networks that guarantee it.
\item Our earlier work~\cite{pradeep2025structuralH} established that signed and layered dilation affects $\mathcal{SS}$ herdability in static networks. The coupling between time-varying dynamics and sign structure, however renders these notions considerably harder to analyze in the temporal setting, motivating a graph-theoretic condition, independent of dilations, that suffices for $\mathcal{SS}$ herdability.
\end{itemize}

 These observations motivate the study of herdability specifically for temporally switching networks. The  main contributions of this work are summarized as follows:
\begin{itemize}
    \item Leveraging the relationship between the controllability matrix and temporal walks, we derive necessary conditions and separately sufficient conditions for herdability in temporal networks where the underlying graph of each subsystem is a directed acyclic graph (DAG). 

    \item We analyze the effects of edge weights and snapshot durations on herdability in temporal networks, and investigate structural sign $\mathcal{SS}$ herdability under such temporal variations.

\item We introduce the union multigraph of temporal subsystems and derive a novel graph-theoretic condition, independent of signed and layer dilation, namely the existence of a spanning $\pi$-graph, that suffices to guarantee $\mathcal{SS}$ herdability.
    
\end{itemize}

\section{Notations and Background }\label{s2}
\subsection{Notation and Matrix Theory}\label{Notation}
The following notation is used throughout this paper: $[i,n] := {i,i+1,\ldots,n}$. For a vector $k$, $[k]i$ denotes its $i$th entry. A vector is said to be \emph{unisigned} if all its nonzero entries have the same sign. For a matrix $\mathcal{A}$, $\mathcal{A}{ij}$ denotes its $(i,j)$th entry, while $\mathcal{A}{(:,j)}$ and $\mathcal{A}{(i,:)}$ denote its $j$th column and $i$th row, respectively. The image and null space of $\mathcal{A}$ are denoted by $\mathrm{Im}(\mathcal{A}) := {y \mid y=\mathcal{A}v}$ and $\mathrm{Null}(\mathcal{A}) := {v \mid \mathcal{A}v=0}$, respectively. $\mathbb{R}^n_{>0}$ represents the positive orthant of the state space.
\subsection{Graph Theory}\label{Graph theory}

Let $\mathcal{G}=(\mathcal{V},\mathcal{E},\mathcal{A})$ be a weighted signed digraph, where $\mathcal{V}$ and $\mathcal{E}\subseteq\mathcal{V}\times\mathcal{V}$ denote the node and edge sets, respectively. An edge $(i,j)\in\mathcal{E}$ represents a directed edge from node $i$ to node $j$, with $a_{ij}\neq 0$ indicating $(j,i)\in\mathcal{E}$. A \emph{walk} in $\mathcal{G}(\mathcal{A},\mathcal{B})$ is a sequence of directed edges connecting an initial node to a final node, and its \emph{path product} is defined as the product of the corresponding edge weights.

\section{Problem statement}\label{s3}

\subsection{Linear Temporally Switching Systems}
A temporal network can be modelled as a switching system with a predefined switching order:  snapshot $i \in \{1,\dots,N\}$ corresponds to the system pair $(\mathcal{A}_i,\mathcal{B}_i)$ governing the network over $[t_{i-1},t_i)$ denoted as $T_i$, with dynamics
\begin{equation}\label{sys1}
\dot{\mathbf{x}}(t) = \mathcal{A}_i \mathbf{x}(t) + \mathcal{B}_i \mathbf{u}_i(t), \quad t \in [t_{i-1},t_i),
\end{equation}
where $\mathbf{x}(t) = [x_1(t),\dots,x_n(t)]^\top \in \mathbb{R}^n$ is the state vector. Here, $\mathcal{A}_i$ is the adjacency matrix of the $i^{th}$ snapshot, with $a^i_{jk} \neq 0$ iff $(k,j) \in \mathcal{E}_i$, and $\mathcal{B}_i$ specifies the driver nodes through which the control input $\mathbf{u}_i(t) \in \mathbb{R}^m$ is applied. In this paper, we investigate the herdability of temporal networks with a single leader node across all snapshots. Without loss of generality (WLOG), we assume node 1 to be the leader. Consequently, the input matrix is given by \(\mathcal{B}_i=B=b_ie_1\), where $e_1$ denotes the first standard basis vector in $\mathbb{R}^n$ and $b_i$ is the scalar input gain associated with snapshot $i$. WLOG, we assume $b_1 = 1$. For each subsystem $(\mathcal{A}_i,\mathcal{B})$, $i \in \{1,\dots,N\}$, the corresponding directed graph is denoted by $\mathcal{G}_i(\mathcal{A}_i,\mathcal{B})$.
        
      \vspace{-1mm}
\begin{figure}[ht]
    \centering
\tikzset{every picture/.style={line width=0.75pt}} 

\begin{tikzpicture}[x=0.65pt,y=0.55pt,yscale=-1,xscale=1]

\draw [color={rgb, 255:red, 0; green, 0; blue, 0 }  ,draw opacity=1 ] [dash pattern={on 4.5pt off 4.5pt}]  (303.67,188.76) .. controls (303.67,172.71) and (307.16,165.78) .. (306.59,151.36) ;
\draw [shift={(306.43,148.5)}, rotate = 85.88] [fill={rgb, 255:red, 0; green, 0; blue, 0 }  ,fill opacity=1 ][line width=0.08]  [draw opacity=0] (8.04,-3.86) -- (0,0) -- (8.04,3.86) -- (5.34,0) -- cycle    ;
\draw  [color={rgb, 255:red, 0; green, 0; blue, 0 }  ,draw opacity=1 ][fill={rgb, 255:red, 208; green, 2; blue, 27 }  ,fill opacity=1 ] (303.28,142.93) .. controls (303.28,141.05) and (304.82,139.52) .. (306.73,139.52) .. controls (308.64,139.52) and (310.18,141.05) .. (310.18,142.93) .. controls (310.18,144.82) and (308.64,146.35) .. (306.73,146.35) .. controls (304.82,146.35) and (303.28,144.82) .. (303.28,142.93) -- cycle ;
\draw [color={rgb, 255:red, 0; green, 0; blue, 0 }  ,draw opacity=0.67 ]   (209.35,206.48) -- (209.52,113.94) ;
\draw [shift={(209.52,110.94)}, rotate = 90.11] [fill={rgb, 255:red, 0; green, 0; blue, 0 }  ,fill opacity=0.67 ][line width=0.08]  [draw opacity=0] (8.04,-3.86) -- (0,0) -- (8.04,3.86) -- (5.34,0) -- cycle    ;
\draw [color={rgb, 255:red, 0; green, 0; blue, 0 }  ,draw opacity=0.67 ]   (181.92,186.06) -- (304.17,185.92) -- (351.98,185.87) ;
\draw [shift={(354.98,185.87)}, rotate = 179.93] [fill={rgb, 255:red, 0; green, 0; blue, 0 }  ,fill opacity=0.67 ][line width=0.08]  [draw opacity=0] (8.04,-3.86) -- (0,0) -- (8.04,3.86) -- (5.34,0) -- cycle    ;
\draw [color={rgb, 255:red, 0; green, 0; blue, 0 }  ,draw opacity=0.67 ]   (261.38,134.79) -- (196.85,198.68) ;
\draw [shift={(263.52,132.68)}, rotate = 135.29] [fill={rgb, 255:red, 0; green, 0; blue, 0 }  ,fill opacity=0.67 ][line width=0.08]  [draw opacity=0] (8.04,-3.86) -- (0,0) -- (8.04,3.86) -- (5.34,0) -- cycle    ;

\draw [color={rgb, 255:red, 0; green, 0; blue, 0 }  ,draw opacity=1 ][line width=0.75]  [dash pattern={on 4.5pt off 4.5pt}]  (201.22,209.83) .. controls (213.85,213.72) and (236.02,206.33) .. (239.64,207.51) .. controls (243.14,208.65) and (275.44,215.53) .. (294.75,199.79) ;
\draw [shift={(296.8,197.98)}, rotate = 136.43] [fill={rgb, 255:red, 0; green, 0; blue, 0 }  ,fill opacity=1 ][line width=0.08]  [draw opacity=0] (8.04,-3.86) -- (0,0) -- (8.04,3.86) -- (5.34,0) -- cycle    ;
\draw  [color={rgb, 255:red, 0; green, 0; blue, 0 }  ,draw opacity=1 ][fill={rgb, 255:red, 208; green, 2; blue, 27 }  ,fill opacity=1 ] (188.62,207.51) .. controls (188.62,205.65) and (190.38,204.14) .. (192.56,204.14) .. controls (194.73,204.14) and (196.49,205.65) .. (196.49,207.51) .. controls (196.49,209.37) and (194.73,210.88) .. (192.56,210.88) .. controls (190.38,210.88) and (188.62,209.37) .. (188.62,207.51) -- cycle ;
\draw  [color={rgb, 255:red, 0; green, 0; blue, 0 }  ,draw opacity=1 ][fill={rgb, 255:red, 208; green, 2; blue, 27 }  ,fill opacity=1 ] (134.26,141.26) .. controls (134.26,139.4) and (136.02,137.89) .. (138.19,137.89) .. controls (140.36,137.89) and (142.12,139.4) .. (142.12,141.26) .. controls (142.12,143.12) and (140.36,144.63) .. (138.19,144.63) .. controls (136.02,144.63) and (134.26,143.12) .. (134.26,141.26) -- cycle ;
\draw [color={rgb, 255:red, 0; green, 0; blue, 0 }  ,draw opacity=1 ][line width=0.75]  [dash pattern={on 4.5pt off 4.5pt}]  (143.21,147.07) .. controls (167.49,166.08) and (121.94,185.96) .. (186.63,206.88) ;
\draw [shift={(188.62,207.51)}, rotate = 197.37] [fill={rgb, 255:red, 0; green, 0; blue, 0 }  ,fill opacity=1 ][line width=0.08]  [draw opacity=0] (8.04,-3.86) -- (0,0) -- (8.04,3.86) -- (5.34,0) -- cycle    ;
\draw [shift={(153.54,188.63)}, rotate = 242.21] [fill={rgb, 255:red, 0; green, 0; blue, 0 }  ,fill opacity=1 ][line width=0.08]  [draw opacity=0] (8.04,-3.86) -- (0,0) -- (8.04,3.86) -- (5.34,0) -- cycle    ;
\draw  [color={rgb, 255:red, 0; green, 0; blue, 0 }  ,draw opacity=1 ][fill={rgb, 255:red, 208; green, 2; blue, 27 }  ,fill opacity=1 ] (296.87,195.15) .. controls (296.87,193.28) and (298.63,191.77) .. (300.8,191.77) .. controls (302.98,191.77) and (304.74,193.28) .. (304.74,195.15) .. controls (304.74,197.01) and (302.98,198.52) .. (300.8,198.52) .. controls (298.63,198.52) and (296.87,197.01) .. (296.87,195.15) -- cycle ;
\draw  [color={rgb, 255:red, 191; green, 231; blue, 148 }  ,draw opacity=0.14 ][fill={rgb, 255:red, 126; green, 211; blue, 33 }  ,fill opacity=0.09 ][line width=0.75]  (209.71,89.09) -- (353.81,89.09) -- (353.81,185.59) -- (209.71,185.59) -- cycle ;
\draw  [color={rgb, 255:red, 191; green, 231; blue, 148 }  ,draw opacity=0.14 ][fill={rgb, 255:red, 126; green, 211; blue, 33 }  ,fill opacity=0.09 ][line width=0.75]  (251.06,48.58) -- (395.16,48.58) -- (395.16,145.08) -- (251.06,145.08) -- cycle ;
\draw [color={rgb, 255:red, 191; green, 231; blue, 148 }  ,draw opacity=0.14 ][fill={rgb, 255:red, 126; green, 211; blue, 33 }  ,fill opacity=0.09 ][line width=0.75]    (209.71,89.09) -- (251.06,48.58) ;
\draw [color={rgb, 255:red, 191; green, 231; blue, 148 }  ,draw opacity=0.14 ][fill={rgb, 255:red, 126; green, 211; blue, 33 }  ,fill opacity=0.09 ][line width=0.75]    (209.71,185.59) -- (251.06,145.08) ;
\draw [color={rgb, 255:red, 191; green, 231; blue, 148 }  ,draw opacity=0.14 ][fill={rgb, 255:red, 126; green, 211; blue, 33 }  ,fill opacity=0.09 ][line width=0.75]    (353.81,185.59) -- (395.16,145.08) ;
\draw [color={rgb, 255:red, 191; green, 231; blue, 148 }  ,draw opacity=0.14 ][fill={rgb, 255:red, 126; green, 211; blue, 33 }  ,fill opacity=0.09 ][line width=0.75]    (353.81,89.09) -- (395.16,48.58) ;

\draw  [color={rgb, 255:red, 191; green, 231; blue, 148 }  ,draw opacity=0.14 ][fill={rgb, 255:red, 126; green, 211; blue, 33 }  ,fill opacity=0.09 ][line width=0.75]  (209.71,89.93) -- (251.06,48.58) -- (395.16,48.58) -- (395.16,145.07) -- (353.8,186.42) -- (209.71,186.42) -- cycle ; \draw  [color={rgb, 255:red, 191; green, 231; blue, 148 }  ,draw opacity=0.14 ][line width=0.75]  (395.16,48.58) -- (353.8,89.93) -- (209.71,89.93) ; \draw  [color={rgb, 255:red, 191; green, 231; blue, 148 }  ,draw opacity=0.14 ][line width=0.75]  (353.8,89.93) -- (353.8,186.42) ;

\draw (128.62,120.45) node [anchor=north west][inner sep=0.75pt]  [font=\footnotesize]  {$x( t_{0})$};
\draw (179.88,214.39) node [anchor=north west][inner sep=0.75pt]  [font=\footnotesize]  {$x( t_{1})$};
\draw (294.24,121.97) node [anchor=north west][inner sep=0.75pt]  [font=\footnotesize]  {$x( t_{f})$};
\draw (167.52,155.37) node [anchor=north west][inner sep=0.75pt]  [font=\scriptsize,color={rgb, 255:red, 28; green, 20; blue, 199 }  ,opacity=1 ]  {$u_{1}( t)$};
\draw (243.68,190.74) node [anchor=north west][inner sep=0.75pt]  [font=\scriptsize,color={rgb, 255:red, 28; green, 20; blue, 199 }  ,opacity=1 ]  {$u_{2}( t)$};
\draw (304.65,194.57) node [anchor=north west][inner sep=0.75pt]  [font=\footnotesize]  {$x( t_{2})$};
\draw (310.4,167.31) node [anchor=north west][inner sep=0.75pt]  [font=\scriptsize,color={rgb, 255:red, 28; green, 20; blue, 199 }  ,opacity=1 ]  {$u_{3}( t)$};
\draw (272.33,63.4) node [anchor=north west][inner sep=0.75pt]    {$\mathbb{R}^{n}  >0$};

\end{tikzpicture}
    \caption{Temporal herdability: the state is steered from $x(t_0)$ to $x(t_f) \in \mathbb{R}^n_{>0}$ using piecewise-defined inputs. The shaded region represents the positive orthant.}
    
    \label{fig:1}
\end{figure}

\begin{definition}
    Consider the temporal network governed by \eqref{sys1} over the temporal sequence 
    $\{T_i\}_{i=1}^N$, where  $T_i = [t_{i-1}, t_i)$ and $t_0 < t_1 < \cdots < t_N=t_f$. The system is said to be herdable over $\{T_i\}_{i=1}^N$ if, for every initial condition $x(t_0)$ $\in \mathbb{R}^n$, there exists piecewise continuous inputs $u_i : [t_{i-1}, t_i) \to \mathbb{R}^{m_i}$, $i = 1, \ldots, N$, such that $x(t_f) \in \mathbb{R}^n_{>0}$.
\end{definition}
Fig.~\ref{fig:1} illustrates temporal herdability, where the system state is steered from an arbitrary initial condition to the positive orthant using a piecewise input. Let us assume that the initial condition $x(0)=0$, then the reachable subspace of system $\eqref{sys1}$ is given in \cite{li2017fundamental} as

{ \small
{
\begin{equation}
\label{C-subspace}
\Omega_{\{ T_i\}}
=
\langle \mathcal{A}_N \mid \mathcal{B}_N \rangle
+
\sum_{j=2}^{N}
\left(
\prod_{i=j}^{N} e^{\mathcal{A}_i T_i}
\right)
\langle \mathcal{A}_{j-1} \mid \mathcal{B}_{j-1} \rangle
\end{equation}
}}
where $\langle \mathcal{A}_N \mid \mathcal{B} \rangle = \sum_{k=0}^{n-1} \mathcal{A}_N^{k}\,\mathcal{R}(\mathcal{B})$ denotes the controllable subspace associated with snapshot $N$, $\mathcal{R}(\mathcal{B})$ denotes the column space of $\mathcal{B}$, $T_i=[t_{i-1},t_i)$ denotes the $i$-th time interval, and 
$e^{\mathcal{A}_iT_i}=e^{\mathcal{A}_i(t_i-t_{i-1})}$ denotes the state transition matrix $(\mathcal{STM})$ associated with $\mathcal{A}_i$ over $T_i$.
\begin{remark}\cite{zhang2024reachability}
For the temporally switching network \eqref{sys1}, the controllability matrix $\mathcal{C}(t_f)$ from \eqref{C-subspace}, evaluated at time $t_f$, can be expressed as
\vspace{-1mm}
\begin{equation}\label{C-temp}
\mathcal{C}(t_f) = \begin{bmatrix} e^{\mathcal{A}_N T_N}\cdots e^{\mathcal{A}_2 T_2} C_1, \ldots,\; e^{\mathcal{A}_N T_N} C_{N-1},\; C_N \end{bmatrix},
\end{equation}
where $C_k$, $k \in \{1,\dots,N\}$, is the controllability matrix of the pair $(\mathcal{A}_k,\mathcal{B})$. Each term corresponds to a controllability subspace generated during an earlier switching interval and propagated through the subsequent dynamics; e.g., $e^{\mathcal{A}_N T_N} C_{N-1}$ represents the directions produced over interval $N-1$ after evolving under $\mathcal{A}_N$ for $[t_{N-1},t_N)$.
\end{remark}
 \begin{definition}[Temporal walk \cite{hou2016structural}] 

In a temporally switching network, a temporal walk from node $j$ to node $k$ is defined as a sequence of nodes $n_0, n_1, \ldots, n_L$, with $n_0 = j$ and $n_L = k$, together with snapshot indices $i_\ell \in \{1, \ldots, N\}$, satisfying $ i_1 \le i_2 \le \cdots \le i_L,$ such that $ a^{i_\ell}_{n_\ell,n_{\ell-1}} \neq 0, \quad \ell = 1, \ldots, L,$ where $L$ is the length of the walk.
 
\end{definition}

The path sign is the sign of the product of the edge weights along the walk. A node is said to be input-connected if there exists a temporal walk from a leader node to that node across the sequence of snapshots. In the subsequent section, we derive some conditions for herdability in a temporal network using temporal walks.

\section{Necessary and Sufficient Conditions for Herdability in Temporal Networks}\label{s4}
In this section, we derive necessary and sufficient conditions for herdability for a given realization of $(A_i, B)$. These conditions are obtained using the relationship between temporal walks and the entries of the controllability matrix, building on the results in \cite{zhang2024reachability}.

\begin{proposition}[\cite{pradeep2025structuralH}] \label{test for H}
A system is herdable if and only if there is no nonnegative vector $\bm{y}\in\mathbb{R}^n$ satisfying $\mathcal{C}(t_f)^{\top}\bm{y}=0$, where $\mathcal{C}(t_f)$ denotes the controllability matrix. Equivalently, if $\mathrm{Null}\big(\mathcal{C}(t_f)^{\top}\big)\cap\mathbb{R}{\geq 0}^n=\varnothing$, then there exists a vector $\bm{v}\in\mathbb{R}{>0}^n$ such that $\bm{v}\in\mathrm{Im}\big(\mathcal{C}(t_f)\big)$.

\end{proposition}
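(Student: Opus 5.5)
The plan is to reduce herdability to a statement about the image of $\mathcal{C}(t_f)$ and then apply a theorem of the alternative (Gordan/Farkas). First I show that herdability over $\{T_i\}_{i=1}^N$ is equivalent to $\mathrm{Im}(\mathcal{C}(t_f))\cap\mathbb{R}^n_{>0}\neq\varnothing$. By linearity, the final state can be written as
\[
x(t_f)=\Phi(t_f,t_0)x(t_0)+x_u(t_f),
\]
with $\Phi(t_f,t_0)=e^{\mathcal{A}_NT_N}\cdots e^{\mathcal{A}_1T_1}$. The input-driven part $x_u(t_f)$ ranges over exactly the reachable subspace $\Omega_{\{T_i\}}$ in \eqref{C-subspace}, which by the Remark equals $\mathrm{Im}(\mathcal{C}(t_f))$. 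I will take this identification from \cite{li2017fundamental,zhang2024reachability} rather than re-derive it; this is where the fixed switching order enters.

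For sufficiency of the image condition, suppose some $v\in\mathrm{Im}(\mathcal{C}(t_f))$ has $v>0$. Given any $x(t_0)$, write $z=\Phi(t_f,t_0)x(t_0)$. Choose a scalar $\alpha>0$ large enough that $z+\alpha v>0$, for instance $\alpha>\max_i(-z_i)/v_i$. Since $\alpha v$ lies in the reachable subspace, some piecewise continuous inputs produce $x_u(t_f)=\alpha v$, so the system is herdable. For necessity, take $x(t_0)=0$; herdability then forces $x_u(t_f)\in\mathbb{R}^n_{>0}$, and $x_u(t_f)$ lies in the image.

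Next I apply Gordan's alternative to the matrix $\mathcal{C}(t_f)$. Exactly one of the following holds:
\begin{itemize}
\item[(a)] there exists $w$ with $\mathcal{C}(t_f)w>0$;
\item[(b)] there exists $y\ge 0$, $y\neq 0$, with $\mathcal{C}(t_f)^\top y=0$.
\end{itemize}
Statement (a) is precisely $\mathrm{Im}(\mathcal{C}(t_f))\cap\mathbb{R}^n_{>0}\neq\varnothing$. Combining this with the first step gives the claim: the system is herdable if and only if no nonzero nonnegative $y$ satisfies $\mathcal{C}(t_f)^\top y=0$.

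The statement as printed omits the requirement $y\neq0$, and literally $0$ always lies in $\mathrm{Null}(\mathcal{C}(t_f)^\top)\cap\mathbb{R}^n_{\ge0}$. I will therefore interpret the statement with $y\neq 0$ (equivalently, $\mathbb{R}^n_{\ge0}\setminus\{0\}$) throughout.

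If a self-contained proof of Gordan is wanted, the direction (b)$\Rightarrow$ not (a) is immediate: $y^\top\mathcal{C}w=0$ contradicts $y\ge0$, $y\ne0$, $\mathcal{C}w>0$. The converse uses separation of the convex cone $\mathrm{Im}(\mathcal{C}(t_f))$ from the open convex cone $\mathbb{R}^n_{>0}$. A separating functional $y$ must vanish on the subspace and be nonnegative on the orthant, which gives (b).

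The main obstacle is the first step: justifying that the set of input-driven terminal states under the fixed switching sequence is the full subspace $\Omega_{\{T_i\}}$, i.e.\ that each summand's Krylov subspace $\langle\mathcal{A}_j\mid\mathcal{B}\rangle$ is reachable within a finite interval $T_j$ and then propagated by the subsequent transition matrices. I will rely on the cited reachability result for this.
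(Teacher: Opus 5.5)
Your proposal is correct. The paper gives no proof of this proposition; it imports it from \cite{pradeep2025structuralH}. Your two-step argument is the standard one behind that citation: first, superposition plus scaling shows that herdability is equivalent to $\mathrm{Im}(\mathcal{C}(t_f))\cap\mathbb{R}^n_{>0}\neq\varnothing$; second, Gordan's alternative is applied to $\mathcal{C}(t_f)$. Your insistence on $\bm{y}\neq 0$ is necessary, since $0$ always lies in $\mathrm{Null}(\mathcal{C}(t_f)^\top)\cap\mathbb{R}^n_{\geq 0}$. It also matches the $\{0\}$ formulation of Corollary~\ref{corollary1}. In the scaling step, read your bound as $\alpha>\max\{0,\max_i(-z_i/v_i)\}$.

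The step you flag as the main obstacle is elementary and need not be outsourced. Write $x_u(t_f)=\sum_{j=1}^{N}\Phi(t_f,t_j)\int_{t_{j-1}}^{t_j}e^{\mathcal{A}_j(t_j-s)}\mathcal{B}u_j(s)\,ds$, where $\Phi(t_f,t_j)=e^{\mathcal{A}_NT_N}\cdots e^{\mathcal{A}_{j+1}T_{j+1}}$. The inputs on different intervals are chosen independently. On each interval of positive length, the integral ranges exactly over $\mathrm{Im}(C_j)$: inputs of the form $u_j(s)=\mathcal{B}^\top e^{\mathcal{A}_j^\top(t_j-s)}\eta$ realize the range of the finite-horizon Gramian, and Cayley--Hamilton gives the reverse inclusion. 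Hence the reachable set is the sum of the propagated images, which is exactly $\mathrm{Im}(\mathcal{C}(t_f))$.
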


The following corollary applies the generalized theorem to characterize $\mathcal{SS}$ herdability in temporal networks.

\begin{corollary}\label{corollary1}
A linear temporally switching system \eqref{sys1} is herdable on the interval $[t_0,t_f)$ if and only if
$\mathrm{Null}(\mathcal{C}(t_f)^\top)\cap\mathbb{R}_{\geq0}^{\,n}=\{0\}$,
where $\mathcal{C}(t_f)$ is the controllability matrix of \eqref{sys1}, given by \eqref{C-temp}, evaluated at $t_f$. Equivalently, there exists $v>0$ such that $v\in\mathrm{Im}\big(\mathcal{C}(t_f)\big)$.
\end{corollary}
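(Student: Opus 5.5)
The plan is to reduce temporal herdability to a statement about the image of $\mathcal{C}(t_f)$, and then apply Proposition~\ref{test for H}, which is a Farkas/Gordan-type alternative. The steps are: (i) write the solution of \eqref{sys1} at $t_f$; (ii) show the set of states reachable from zero at $t_f$ is exactly $\mathrm{Im}(\mathcal{C}(t_f))$; (iii) use linearity to handle arbitrary initial conditions; (iv) invoke Proposition~\ref{test for H}.

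First, iterate the variation-of-constants formula across the intervals $T_1,\dots,T_N$:
\[
x(t_f)=\Big(\prod_{i=N}^{1}e^{\mathcal{A}_iT_i}\Big)x(t_0)+\sum_{j=1}^{N}\Big(\prod_{i=N}^{j+1}e^{\mathcal{A}_iT_i}\Big)\int_{t_{j-1}}^{t_j}e^{\mathcal{A}_j(t_j-s)}\mathcal{B}u_j(s)\,ds .
\]
Within a single snapshot, the set of values of $\int_{t_{j-1}}^{t_j}e^{\mathcal{A}_j(t_j-s)}\mathcal{B}u_j(s)\,ds$ over piecewise continuous $u_j$ is the controllable subspace $\langle\mathcal{A}_j\mid\mathcal{B}\rangle=\mathrm{Im}(C_j)$. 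This is the standard Gramian argument: the Gramian over a positive-length interval has range $\mathrm{Im}(C_j)$, and the choice $u_j(s)=\mathcal{B}^\top e^{\mathcal{A}_j^\top(t_j-s)}W_j^{\dagger}w$ attains any $w$ in that range.

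Since each $u_j$ can be chosen independently, the zero-initial-state reachable set is the sum of the propagated subspaces. This recovers \eqref{C-subspace}, and therefore equals $\mathrm{Im}(\mathcal{C}(t_f))$ with $\mathcal{C}(t_f)$ as in \eqref{C-temp}. That identification is exactly the content of the preceding Remark, which I will cite.

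To handle an arbitrary $x(t_0)$, write $x(t_f)=\Phi x(t_0)+w$, where $\Phi=\prod e^{\mathcal{A}_iT_i}$ and $w$ ranges over $\mathrm{Im}(\mathcal{C}(t_f))$.

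For sufficiency, suppose some $v>0$ lies in $\mathrm{Im}(\mathcal{C}(t_f))$. Since $\mathrm{Im}(\mathcal{C}(t_f))$ is a subspace, $\lambda v$ lies in it for every $\lambda>0$. Choose $\lambda$ large enough that $\Phi x(t_0)+\lambda v>0$ componentwise, which is possible because $v$ has strictly positive entries. The corresponding inputs steer the state into $\mathbb{R}^n_{>0}$.

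For necessity, take $x(t_0)=0$. Herdability then requires some reachable $x(t_f)>0$, and this $x(t_f)$ lies in $\mathrm{Im}(\mathcal{C}(t_f))$.

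It remains to show that the existence of $v>0$ in $\mathrm{Im}(\mathcal{C}(t_f))$ is equivalent to $\mathrm{Null}(\mathcal{C}(t_f)^\top)\cap\mathbb{R}^n_{\ge0}=\{0\}$. This is the content of Proposition~\ref{test for H}, a Gordan-type alternative, applied to the matrix $\mathcal{C}(t_f)$. The easy direction goes as follows. If $y\ge0$, $y\ne0$, and $\mathcal{C}(t_f)^\top y=0$, then for $v=\mathcal{C}(t_f)z$ we get $y^\top v=0$. This is impossible when $v>0$, since then $y^\top v>0$. The converse is a separating-hyperplane argument between the subspace $\mathrm{Im}(\mathcal{C}(t_f))$ and the open positive orthant: if they are disjoint, a separating functional is nonnegative and vanishes on the subspace.

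The main obstacle is the identification of the reachable set with $\mathrm{Im}(\mathcal{C}(t_f))$. In particular, it must be shown that each finite-length interval $T_j$ suffices to reach all of $\langle\mathcal{A}_j\mid\mathcal{B}\rangle$. Since the paper takes this from \cite{li2017fundamental,zhang2024reachability}, I will cite it rather than reprove it. I will also note that Proposition~\ref{test for H}, though stated for static systems, uses only the linear-algebraic structure of the image of the controllability matrix, so it transfers verbatim to the temporal case.
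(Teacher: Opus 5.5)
Your proposal is correct and follows essentially the paper's route. The paper gives no separate proof; it simply applies the Gordan-type test of Proposition~\ref{test for H} to $\mathcal{C}(t_f)$, whose image is the reachable subspace \eqref{C-subspace} taken from the cited works. Your write-up fills in details the paper leaves implicit, notably the scaling step $\Phi x(t_0)+\lambda v>0$ that extends the zero-initial-state argument to arbitrary $x(t_0)$ as the herdability definition requires, and those steps check out.
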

    
The above proposition and corollary provide conditions for assessing $\mathcal{SS}$ herdability for a given realization, beyond sign-pattern-based analysis. The following results present a necessary condition and separate sufficient conditions for herdability in temporal networks.
   
\begin{proposition}
    A temporally switching network is herdable only if all the nodes are input-connected by at least one temporal walk from the leader.
\end{proposition}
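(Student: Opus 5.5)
We prove the contrapositive: if some node $k$ is not input-connected from the leader (node $1$) by any temporal walk, then the system is not herdable. The tool is Corollary~\ref{corollary1}. It suffices to exhibit the nonzero nonnegative vector $y=e_k$ with $\mathcal{C}(t_f)^\top e_k=0$, i.e., to show that the $k$th row of $\mathcal{C}(t_f)$ in \eqref{C-temp} is identically zero.

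We first relate the entries of every block of \eqref{C-temp} to temporal walks.

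\textbf{Blocks $C_j$.} For each $j$, the columns of $C_j$ are $\mathcal{A}_j^{\ell}\mathcal{B}=b_j\mathcal{A}_j^{\ell}e_1$ for $\ell=0,\dots,n-1$. Their $k$th entry is $b_j[\mathcal{A}_j^{\ell}]_{k1}$. This is a sum of path products over walks of length $\ell$ from $1$ to $k$ that use only edges of snapshot $j$.

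\textbf{Exponentials.} Next, expand each exponential as $e^{\mathcal{A}_iT_i}=\sum_{r\ge0}\frac{T_i^r}{r!}\mathcal{A}_i^r$. Then the $(k,1)$ entry of $e^{\mathcal{A}_NT_N}\cdots e^{\mathcal{A}_{j+1}T_{j+1}}\mathcal{A}_j^{\ell}$ is a convergent sum. Each term has the form
\[
\frac{T_N^{r_N}\cdots T_{j+1}^{r_{j+1}}}{r_N!\cdots r_{j+1}!}\,\big[\mathcal{A}_N^{r_N}\cdots\mathcal{A}_{j+1}^{r_{j+1}}\mathcal{A}_j^{\ell}\big]_{k1}.
\]
Expanding the matrix product entrywise gives
\[
\big[\mathcal{A}_N^{r_N}\cdots\mathcal{A}_j^{\ell}\big]_{k1}=\sum a^{i_L}_{n_L n_{L-1}}\cdots a^{i_1}_{n_1 n_0},
\]
summed over node sequences with $n_0=1$ and $n_L=k$. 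The snapshot labels $i_1\le\dots\le i_L$ are nondecreasing, because the rightmost factor $\mathcal{A}_j$ acts first. So every nonzero summand is exactly the path product of a temporal walk from $1$ to $k$ in the sense of the definition above.

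Since no such temporal walk exists, every summand vanishes. Hence every term of every series vanishes, and therefore every entry in row $k$ of $\mathcal{C}(t_f)$ is zero. Consequently $\mathcal{C}(t_f)^\top e_k=0$ with $0\ne e_k\in\mathbb{R}^n_{\ge0}$, so by Corollary~\ref{corollary1} the system is not herdable.

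The same conclusion also follows directly from the definition of herdability. Take $x(t_0)$ with $x_k(t_0)\le 0$ and $x_i(t_0)=0$ for every node $i$ from which $k$ is reachable by a temporal walk. The same walk argument, applied to the state-transition matrix $\prod_i e^{\mathcal{A}_iT_i}$, shows that $x_k(t_f)=0$ regardless of the input. One can use either route.

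The step needing the most care is the correct ordering of the matrix products: one must check that the products arising in \eqref{C-temp} always have snapshot indices that are nondecreasing from right to left, so that every nonzero monomial genuinely corresponds to a temporal walk as defined. This holds because each block has the form $e^{\mathcal{A}_N T_N}\cdots e^{\mathcal{A}_{j+1}T_{j+1}}C_j$. Convergence of the exponential series is routine, since a termwise-zero absolutely convergent series is zero.
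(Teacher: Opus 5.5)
Your proof is correct and is essentially the paper's argument in contrapositive form. The paper observes that a positive vector $\mathcal{C}(t_f)\boldsymbol{\delta}$ forces every row of $\mathcal{C}(t_f)$ to be nonzero and that nonzero entries come from temporal walks; you make that walk expansion explicit (including the nondecreasing snapshot order) and conclude via the nonnegative null vector $e_k$ of Corollary~\ref{corollary1}.

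One small slip in your optional aside: if $x_k(t_0)<0$ is allowed, the identity term of $\prod_i e^{\mathcal{A}_iT_i}$ contributes to $x_k(t_f)$. So the claim $x_k(t_f)=0$ actually requires $x_k(t_0)=0$; simply take $x(t_0)=0$.
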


\begin{proof} Consider the temporally switching network governed by \eqref{sys1}. Let $\mathcal{C}(t_f)$ denote the controllability matrix defined in \eqref{C-temp}. Since the network is herdable, there exists a vector $\boldsymbol{\delta} \in \mathbb{R}^{nN}$ such that \(v=\mathcal{C}(t_f) \cdot \boldsymbol{\delta}>0\). Now expanding \(\mathcal{C}(t_f) \cdot \boldsymbol{\delta}\) gives \(\mathcal{C}(t_f)\cdot\boldsymbol{\delta}
= \sum_{j=1}^{nN} \delta_j, \mathcal{C}(t_f)_{(:,j)},\) where $\mathcal{C}(t_f)_{(:,j)}$ denotes the $j^{th}$ column of $\mathcal{C}(t_f)$. Hence $v$ is a linear combination of the columns of $\mathcal{C}(t_f)$.

Since $v_i>0$ for all $i \in \{1,\ldots,n\}$, each row $i$ of $\mathcal{C}(t_f)$ must contain at least one nonzero entry; otherwise the $i^{th}$ component of $\mathcal{C}(t_f)\cdot\boldsymbol{\delta}$ would be zero for any $\boldsymbol{\delta}$, contradicting $v_i>0$. Furthermore, each nonzero entry of the controllability matrix corresponds to the existence of a temporal walk from the leader node to the corresponding node in the network. Therefore, every node admits at least one temporal walk from the leader node. This implies that all the nodes are input-connected. Hence proved. 
\end{proof}

\begin{proposition}\label{prop:uni} Let $\mathcal{C}(t_f)$ be the controllability matrix associated with the temporally switching network given in \eqref{C-temp} evaluated at time $t_f$. If every row of $\mathcal{C}(t_f)$ contains at least one nonzero entry that belongs to a unisigned column, then the temporal network is herdable.
\end{proposition}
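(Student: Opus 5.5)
The plan is to build a strictly positive vector in $\mathrm{Im}(\mathcal{C}(t_f))$ directly from the unisigned columns, and then conclude herdability by Corollary~\ref{corollary1}.

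\textbf{Selecting the columns.} For each row $i\in\{1,\dots,n\}$, the hypothesis gives a column index $j(i)$ with two properties: the column $c_{j(i)}:=\mathcal{C}(t_f)_{(:,j(i))}$ is unisigned, and its $i$th entry is nonzero. Let $J=\{j(1),\dots,j(n)\}$ be the set of selected indices, discarding repetitions. For each $j\in J$, set $\sigma_j\in\{+1,-1\}$ to be the common sign of the nonzero entries of $c_j$. A column in $J$ cannot be identically zero, since it has a nonzero entry in some row. Hence every vector $\sigma_j c_j$ is entrywise nonnegative.

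\textbf{Forming the combination.} Define the coefficient vector $\boldsymbol{\delta}$ by $\delta_j=\sigma_j$ for $j\in J$ and $\delta_j=0$ otherwise. Then
\[
v=\mathcal{C}(t_f)\boldsymbol{\delta}=\sum_{j\in J}\sigma_j c_j .
\]
Each summand is entrywise nonnegative, so no cancellation can occur in any coordinate. Consequently, for every row $i$,
\[
v_i\ \ge\ \sigma_{j(i)}\,[c_{j(i)}]_i\ =\ \bigl|[c_{j(i)}]_i\bigr|\ >\ 0 .
\]
This gives $v\in\mathbb{R}^n_{>0}\cap\mathrm{Im}(\mathcal{C}(t_f))$. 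By Corollary~\ref{corollary1}, equivalently Proposition~\ref{test for H}, the temporal network is herdable on $[t_0,t_f)$.

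\textbf{Dual check.} The same conclusion can also be verified through the null-space form. Suppose $y\ge 0$ satisfies $\mathcal{C}(t_f)^\top y=0$. Then $\sigma_j c_j^\top y=0$ for every $j\in J$, which is a sum of nonnegative terms and so forces $y_i=0$ whenever $[c_j]_i\neq0$. Because the columns in $J$ cover every row, this gives $y=0$.

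\textbf{Main obstacle.} The real content is the no-cancellation argument. It relies on the fact that multiplying a unisigned column by its sign yields a nonnegative vector, and that nonnegative vectors cannot cancel one another. The delicate point is ensuring that the image vector reaches the positive orthant in the sense of the herdability definition for an arbitrary initial condition $x(t_0)$, not only for $x(t_0)=0$. I would handle this by invoking Corollary~\ref{corollary1} as stated, which already encodes the equivalence between herdability and $\mathrm{Im}(\mathcal{C}(t_f))\cap\mathbb{R}^n_{>0}\neq\varnothing$. The underlying reason is that the drift term $\Phi(t_f,t_0)x(t_0)$ can be dominated by scaling $v$: since $\mathrm{Im}(\mathcal{C}(t_f))$ is the reachable subspace, $\alpha v$ is reachable from zero for any $\alpha>0$, and adding the drift term still lands in the positive orthant once $\alpha$ is large enough.
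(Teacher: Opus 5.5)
Your proof is correct and is essentially the paper's argument: give each selected unisigned column a coefficient whose sign matches that column's common sign, set all other coefficients to zero, and note that a sum of entrywise nonnegative vectors cannot cancel, so every row covered by such a column receives a strictly positive entry. The explicit bound $v_i \ge |[c_{j(i)}]_i|$, the dual null-space check, and the observation that $\Phi(t_f,t_0)x(t_0)+\alpha v>0$ for large enough $\alpha$ make rigorous what the paper's proof leaves implicit.
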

\begin{proof} Let $\boldsymbol{\delta} \in \mathbb{R}^{nN}$ be the preimage vector of the controllability matrix $\mathcal{C}(t_f)$ associated with the temporal network given in \eqref{C-temp}. By assumption, for every row $i$, $i \in \{1,2,\dots,n\}$, there exists a column  $j=j(i)$, $j \in \{1,2,\dots,nN\}$, such that $\mathcal{C}_{(i,j)} \ne 0 $ and column $j$ unisigned; i.e., all the nonzero entries of column $j$ have the same sign.

Let such entries be denoted by $\mathcal{C}(t_f)_{(i,j)}$. For each unisigned column $j$, define the corresponding entry of $\boldsymbol{\delta}$ such that \(\operatorname{sign}(\delta_j)=\operatorname{sign}(\mathcal{C}(t_f)_{(i,j)}),\) and set $\delta_j=0$ for all columns $j$ that are not unisigned. Since each unisigned column contributes positively to $\mathcal{C}(t_f)\cdot\boldsymbol{\delta}$, every component of $v=\mathcal{C}(t_f)\cdot\boldsymbol{\delta}$ is positive. Hence proved.
\end{proof} 
 \begin{proposition}
     If all the temporal walks to each node are unique and have the same path sign, then the network is herdable.
 \end{proposition}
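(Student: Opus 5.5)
The plan is to reduce the statement to Proposition~\ref{prop:uni}: if every column of $\mathcal{C}(t_f)$ is unisigned, and every node is reached by at least one temporal walk (so every row has a nonzero entry), then the hypothesis of Proposition~\ref{prop:uni} holds and the network is herdable. I read the hypothesis as follows. For each node $k$, the temporal walks from the leader to $k$ all carry the same path sign $\sigma_k$. Moreover, walks are ``unique'' in the sense that no two distinct temporal walks contribute to the same entry with cancelling weights. Implicitly, every node is input-connected, which is the necessary condition of the previous proposition.

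The key step is to expand each block of \eqref{C-temp} entrywise in terms of temporal walks. For the block $e^{\mathcal{A}_N T_N}\cdots e^{\mathcal{A}_{j+1}T_{j+1}}\mathcal{A}_j^{r}B$, I would write each exponential as its power series, $e^{\mathcal{A}_iT_i}=\sum_{m\ge 0}\mathcal{A}_i^mT_i^m/m!$. The $(k,\cdot)$ entry then becomes
\[
b_j\sum_{m_{j+1},\dots,m_N\ge 0}\Big(\prod_{i>j}\tfrac{T_i^{m_i}}{m_i!}\Big)\big[\mathcal{A}_N^{m_N}\cdots\mathcal{A}_{j+1}^{m_{j+1}}\mathcal{A}_j^{r}\big]_{k1}.
\]
The bracketed entry is a sum of path products of temporal walks from node $1$ to $k$. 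Such a walk uses $r$ edges of snapshot $j$, then $m_{j+1}$ edges of snapshot $j+1$, and so on, with nondecreasing snapshot indices as in the temporal walk definition. Since all $T_i>0$, the factorials are positive, and we may take the gains $b_j>0$ (or absorb their sign into the input). So each nonzero entry $\mathcal{C}_{(k,c)}$ is a positive combination of path products of temporal walks ending at $k$, all of sign $\sigma_k$. Hence $\operatorname{sign}\mathcal{C}_{(k,c)}=\sigma_k$, and no cancellation can occur. Uniqueness guarantees that whenever a walk with the appropriate snapshot pattern exists, the entry is genuinely nonzero. Input-connectivity then gives every row a nonzero entry.

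The main obstacle is that a column is unisigned across \emph{rows}, while the hypothesis controls signs only per target node. If $\sigma_k$ differs between nodes, column $c$ may have entries of both signs. I would handle this by a signature similarity. Let $D=\mathrm{diag}(\sigma_1,\dots,\sigma_n)$ with $\sigma_1=+1$ for the leader. Then $D\,\mathcal{C}(t_f)$ has all entries nonnegative, so every column is unisigned, and Proposition~\ref{prop:uni} yields some $\boldsymbol{\delta}$ with $D\mathcal{C}(t_f)\boldsymbol{\delta}>0$. This alone only shows herdability up to the sign flip $D$. To close the gap, I would argue that uniqueness of walks forces each node's walks to pass through a unique predecessor structure (essentially a tree from the leader, in the DAG setting). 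In that case the columns $\mathcal{A}_j^{r}B$ and their propagations separate nodes by walk length and snapshot. Each node $k$ then has a column in which it is the \emph{only} nonzero entry, or in which all nonzero entries share a sign, and this column is trivially unisigned. Should this fail, I would fall back on the natural reading that ``same path sign'' means a common sign across all walks from the leader. In that reading $D=I$ and the argument above concludes directly via Proposition~\ref{prop:uni}.
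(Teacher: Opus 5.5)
Your fallback is the paper's argument. The paper reads ``same path sign'' as one sign shared by all temporal walks from the leader. It uses the existence of a walk to each node to get a nonzero entry in every row of $\mathcal{C}(t_f)$, asserts that the corresponding columns are unisigned, and applies Proposition~\ref{prop:uni}. Your power-series expansion of \eqref{C-temp} just makes the walk--entry correspondence explicit.

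The branch you lead with cannot be completed, because under the per-node reading the statement is false. Take $N=1$, $n=3$, $\mathcal{B}=e_1$, and $\mathcal{A}_1$ with edges $1\to 2$ of weight $+1$ and $1\to 3$ of weight $-1$. Each follower has exactly one temporal walk, yet $\mathcal{C}(t_f)=[\,e_1,\ e_2-e_3,\ 0\,]$ has image $\{(a,b,-b)^{\top}\}$, which misses $\mathbb{R}^3_{>0}$. The same example refutes your repair: nodes $2$ and $3$ are forced into the common column $\mathcal{A}_1\mathcal{B}$ with opposite signs, and no column isolates either of them. The signature step gives nothing either. $D\mathcal{C}(t_f)\boldsymbol{\delta}>0$ only places $\mathcal{C}(t_f)\boldsymbol{\delta}$ in the orthant $D\,\mathbb{R}^n_{>0}$, and herdability is not invariant under such sign flips. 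You should commit to the global reading from the start.

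Even under the global reading, your claim that ``$D=I$ and the argument concludes directly'' holds only when the common sign $\sigma$ is $+1$. If $\sigma=-1$, the identity term of each exponential (the length-zero walk at the leader) still puts a positive leader entry in every propagated column $e^{\mathcal{A}_N T_N}\cdots e^{\mathcal{A}_{j+1}T_{j+1}}\mathcal{B}$. For instance, $e^{\mathcal{A}_2T_2}\mathcal{B}=e_1-T_2e_2$ when $\mathcal{A}_2$ has the single edge $1\to 2$ of weight $-1$. So not every column is unisigned, and you must choose columns:
\begin{itemize}
\item cover the leader's row by the column $\mathcal{B}$ of $C_N$, which is $b_N e_1$;
\item cover a follower $k$ by $e^{\mathcal{A}_N T_N}\cdots e^{\mathcal{A}_{j+1}T_{j+1}}\mathcal{A}_j^{r}\mathcal{B}$, where $j$ is the snapshot of the first edge of some temporal walk to $k$ and $r\ge 1$ is the number of its edges in snapshot $j$ (at most $n-1$ in the DAG setting).
\end{itemize}
Every entry of such a column is a positive combination of path products of nonempty walks from the leader. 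Hence each entry has sign $\sigma$ or is zero, and the $k$-th entry is nonzero because the chosen walk contributes with a positive coefficient and nothing can cancel it. The paper's one-line claim that ``the corresponding columns are unisigned'' passes over this point in the same way.
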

 
\begin{proof}
Let $\mathcal{C}(t_f)$ be the controllability matrix associated with the temporally switching network given in \eqref{C-temp} evaluated at time $t_f$. By hypothesis, each node is reachable through a unique temporal walk; hence, every row of $\mathcal{C}$ contains at least one nonzero entry.

Moreover, since all temporal walks have the same path product sign, the corresponding columns of $\mathcal{C}$ are unisigned. Consequently, each row of $\mathcal{C}$  has a nonzero entry belonging to an unisigned column. Therefore, by Proposition~\ref{prop:uni}, the temporal network is herdable. \end{proof}

\begin{proposition}
  If the pair $(\mathcal{A}_N,\mathcal{B})$ is herdable, then the temporal network \eqref{sys1} is herdable for any $t \in T_N=[t_{f-1},t_f)$.
\end{proposition}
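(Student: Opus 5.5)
The plan is to show that the image of the temporal controllability matrix contains the image of the last snapshot's controllability matrix, and then apply Corollary~\ref{corollary1}. By \eqref{C-temp}, $\mathcal{C}(t_f)$ contains $C_N$, the controllability matrix of $(\mathcal{A}_N,\mathcal{B})$, as its final block of columns. Equivalently, by \eqref{C-subspace}, the reachable subspace satisfies $\Omega_{\{T_i\}} \supseteq \langle \mathcal{A}_N \mid \mathcal{B}\rangle = \mathrm{Im}(C_N)$. Hence $\mathrm{Im}(C_N)\subseteq \mathrm{Im}(\mathcal{C}(t_f))$.

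Herdability of the static pair $(\mathcal{A}_N,\mathcal{B})$ gives some $v\in\mathbb{R}^n_{>0}$ with $v=C_N\boldsymbol{\eta}$ for a suitable $\boldsymbol{\eta}$; this is the static analogue of Proposition~\ref{test for H}. Take the preimage vector $\boldsymbol{\delta}=[0,\dots,0,\boldsymbol{\eta}^\top]^\top$, which has zeros on all the propagated blocks $e^{\mathcal{A}_N T_N}\cdots C_k$, $k<N$. Then $\mathcal{C}(t_f)\boldsymbol{\delta}=v>0$, and Corollary~\ref{corollary1} yields herdability. In dual form, any nonnegative $y$ with $\mathcal{C}(t_f)^\top y=0$ also satisfies $C_N^\top y=0$, so $y=0$ by the herdability of $(\mathcal{A}_N,\mathcal{B})$.

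The step I expect to need care is handling an arbitrary initial condition $x(t_0)$ and an arbitrary terminal time $t\in T_N$, rather than only $t_f$ with zero initial state. I would use inputs $u_1,\dots,u_{N-1}$ that are arbitrary, for instance zero, so that the state arrives at some $x(t_{N-1})$. On $[t_{N-1},t)$ the system is then the LTI pair $(\mathcal{A}_N,\mathcal{B})$. The reachable subspace of an LTI system over any positive horizon equals $\mathrm{Im}(C_N)$, independent of the horizon length. So for any $t>t_{N-1}$ the set reachable from $x(t_{N-1})$ is the affine set $e^{\mathcal{A}_N(t-t_{N-1})}x(t_{N-1})+\mathrm{Im}(C_N)$. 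Because $\mathrm{Im}(C_N)$ contains a strictly positive vector $v$, adding $\lambda v$ with $\lambda$ large makes every entry positive. This gives $x(t)\in\mathbb{R}^n_{>0}$ for every $t$ in $T_N$, which is the horizon-independence the statement asserts.
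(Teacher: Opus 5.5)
Your core argument matches the paper's proof: $C_N$ is the last block of $\mathcal{C}(t_f)$, so padding the static preimage $\boldsymbol{\eta}$ with zeros gives $\mathcal{C}(t_f)\boldsymbol{\delta}=v>0$. Your extra paragraph, using the affine reachable set $e^{\mathcal{A}_N(t-t_{N-1})}x(t_{N-1})+\mathrm{Im}(C_N)$ with $\mathrm{Im}(C_N)$ independent of the horizon, is not in the paper (which only evaluates at $t_f$), and it is what actually justifies the ``any $t\in T_N$'' clause; note that it covers only $t>t_{N-1}$, as you say, and not the left endpoint of $T_N$, where snapshot $N$ has had zero duration.
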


\begin{proof} The controllability matrix of the pair $(\mathcal{A}_N,\mathcal{B})$ is denoted by $\mathcal{C}_N$. If $\mathcal{C}_N$ is herdable, then there exists a vector $\boldsymbol{\bar{\delta}} \in \mathbb{R}^{n}$ such that \( \bar{v}=\mathcal{C}_N\cdot\boldsymbol{\bar{\delta}} >0 \).

Observe that $\mathcal{C}_N$ forms the last $n$ columns of the controllability matrix $\mathcal{C}(t_f)$ of the temporal network, namely the columns indexed by $[(N-1)n+1, nN]$. Define the vector $\boldsymbol{\delta}\in\mathbb{R}^{nN}$ as 
\[
\delta_l =
\begin{cases}
0, & 1 \le l \le (N-1)n \\
\bar{\delta}_{l-(N-1)n}, & (N-1)n+1 \le l \le nN .
\end{cases}
\]

With this construction, the expression $\mathcal{C}\cdot\boldsymbol{\delta}$ selects only the contribution from the last block of the columns $[(N-1)n+1,nN]$, yielding \(
\mathcal{C}(t_f)\cdot\boldsymbol{\delta} = \mathcal{C}(t_f)_N\cdot\boldsymbol{\bar{\delta}} = \bar{v} >0 .\) Therefore, the temporal network is herdable. 
\end{proof}

\begin{remark}
     Consider the controllability matrix of the temporal network given in \eqref{C-temp}. Let $j \in \{1,2,\dots,N\}$ denote the index of the blocks of $n$ columns in the controllability matrix. Specifically, $j=1$ corresponds to the first $n$ columns, i.e., columns $[1,n]$ and $j=2$ corresponds to the columns $[n+1,2n]$. Similarly, for $j=N$, the block corresponds to the last $n$ columns, namely $[(N-1)n+1,\,Nn]$. In general, the block of the controllability matrix corresponding to the $(p+1)^{\text{th}}$ snapshot is given by the columns $[pn+1,pn+n]$, where $p=0,1,\dots,N-1$.
\vspace{-1.5mm}
    \begin{equation} \label{C-expo}
\mathcal{C}(t_f) =
\begin{bmatrix}
\underbrace{\left(\prod_{k=j+1}^{N} e^{\mathcal{A}_r T_r}\right) C_j}_{j=1};
\underbrace{\left(\prod_{k=j+1}^{N} e^{\mathcal{A}_r T_r}\right) C_j}_{j=2};
..\;
\underbrace{C_N}_{j=N}
\end{bmatrix}
 \end{equation}
where $r=N-k+(j+1)$ and for $j=N$ the corresponding block is the controllability matrix of the pair $(\mathcal{A}_N,\mathcal{B})$. For $j=1,\dots,N-1$ each block corresponds to the controllability matrix of the pair $(\mathcal{A}_j,\mathcal{B})$ evolved through the subsequent snapshots through the corresponding $\mathcal{STM}$s given by \(\left(\prod_{k=j+1}^{N} e^{\mathcal{A}_r T_r}\right)\), where $r=N-k+(j+1)$ , depending on the value of $j$. For example, the first $n$ columns of the controllability matrix correspond to $j=1$. In this case,  $k=2,\dots,N$,  in the expression \(\left(\prod_{k=j+1}^{N} e^{\mathcal{A}_r T_r}\right)\) , results in \([e^{\mathcal{A}_N T_N} e^{\mathcal{A}_{N-1} T_{N-1}} \cdots e^{\mathcal{A}_2 T_2} C_1]\) where $r=N-k+(j+1)$. This represents the controllability matrix of the pair $(\mathcal{A}_1,\mathcal{B})$ evolved through the subsequent snapshots through $\mathcal{STM}s$ from snapshots $2$ to $N$.
\end{remark}

\begin{remark} \label{Remark-path}
    If $[\mathcal{A}_n^k \mathcal{A}_m^l]_{(i,j)} \neq 0$, then there exists a temporal walk from node $i$ to node $j$ of length $k+l$. In particular, there exists a node $q$ such that there is a walk of length $l$ from $i$ to $q$ in $\mathcal{G}_m$ and a walk of length $k$ from $q$ to $j$ in $\mathcal{G}_n$. 
\end{remark}

\begin{remark}\label{lem1}
    Consider a temporally switching network and let  $[pn+1,\ldots,pn+n]$ denote the block associated with the $(p+1)^{th}$ snapshot in the controllability matrix. The time-dependent entries in column $pn+1$ correspond to the set of nodes that are reachable from the leader node via a temporal walk across all snapshots. The entries in column $pn+2$ represent the set of nodes that are reachable from the nodes that are at path length $1$ to the leader in the $(p+1)^{th}$ snapshot. Similarly, subsequent columns represent nodes reachable through a temporal walk starting from nodes reached in the preceding step from the leader node.
\end{remark}

 \begin{example}
     For example, consider the temporally switching network in Fig.\ref{egb}. The evolution  of the controllability matrix of the temporal network across each snapshot is given by
 \end{example}
     \begin{figure}[ht]

\begin{subfigure}{0.16\textwidth}
\centering
\tikzset{every picture/.style={scale=0.65pt}} 

\begin{tikzpicture}[x=0.75pt,y=0.75pt,yscale=-1,xscale=1]

\draw  [fill={rgb, 255:red, 0; green, 147; blue, 0 }  ,fill opacity=1 ][line width=0.75]  (189.29,72.69) .. controls (189.29,69.1) and (192.23,66.19) .. (195.84,66.19) .. controls (199.46,66.19) and (202.4,69.1) .. (202.4,72.69) .. controls (202.4,76.28) and (199.46,79.19) .. (195.84,79.19) .. controls (192.23,79.19) and (189.29,76.28) .. (189.29,72.69) -- cycle ;
\draw  [fill={rgb, 255:red, 28; green, 20; blue, 199 }  ,fill opacity=1 ][line width=0.75]  (145.29,111.69) .. controls (145.29,108.1) and (148.23,105.19) .. (151.84,105.19) .. controls (155.46,105.19) and (158.4,108.1) .. (158.4,111.69) .. controls (158.4,115.28) and (155.46,118.19) .. (151.84,118.19) .. controls (148.23,118.19) and (145.29,115.28) .. (145.29,111.69) -- cycle ;
\draw  [fill={rgb, 255:red, 28; green, 20; blue, 199 }  ,fill opacity=1 ][line width=0.75]  (146.29,164.69) .. controls (146.29,161.1) and (149.23,158.19) .. (152.84,158.19) .. controls (156.46,158.19) and (159.4,161.1) .. (159.4,164.69) .. controls (159.4,168.28) and (156.46,171.19) .. (152.84,171.19) .. controls (149.23,171.19) and (146.29,168.28) .. (146.29,164.69) -- cycle ;
\draw  [fill={rgb, 255:red, 28; green, 20; blue, 199 }  ,fill opacity=1 ][line width=0.75]  (235.29,110.69) .. controls (235.29,107.1) and (238.23,104.19) .. (241.84,104.19) .. controls (245.46,104.19) and (248.4,107.1) .. (248.4,110.69) .. controls (248.4,114.28) and (245.46,117.19) .. (241.84,117.19) .. controls (238.23,117.19) and (235.29,114.28) .. (235.29,110.69) -- cycle ;
\draw  [fill={rgb, 255:red, 28; green, 20; blue, 199 }  ,fill opacity=1 ][line width=0.75]  (236.29,163.69) .. controls (236.29,160.1) and (239.23,157.19) .. (242.84,157.19) .. controls (246.46,157.19) and (249.4,160.1) .. (249.4,163.69) .. controls (249.4,167.28) and (246.46,170.19) .. (242.84,170.19) .. controls (239.23,170.19) and (236.29,167.28) .. (236.29,163.69) -- cycle ;
\draw    (189.4,80.19) -- (160.69,104.26) ;
\draw [shift={(158.4,106.19)}, rotate = 320.01] [fill={rgb, 255:red, 0; green, 0; blue, 0 }  ][line width=0.08]  [draw opacity=0] (10.72,-5.15) -- (0,0) -- (10.72,5.15) -- (7.12,0) -- cycle    ;

\draw (175.6,46.6) node [anchor=north west][inner sep=0.75pt]    {$\mathbf{1}$};
\draw (125.2,102.8) node [anchor=north west][inner sep=0.75pt]    {$\mathbf{2}$};
\draw (258.4,105.4) node [anchor=north west][inner sep=0.75pt]    {$\mathbf{3}$};
\draw (123,156.6) node [anchor=north west][inner sep=0.75pt]    {$\mathbf{4}$};
\draw (263,157.6) node [anchor=north west][inner sep=0.75pt]    {$\mathbf{5}$};
\draw (144,68.12) node [anchor=north west][inner sep=0.75pt]  [font=\footnotesize]  {$1$};

\end{tikzpicture}

\caption{$\mathcal{G}_1(\mathcal{A}_1,\mathcal{B})$}
\end{subfigure}
\centering
\begin{subfigure}{0.15\textwidth}
\centering

\tikzset{every picture/.style={scale=0.65pt}} 

\begin{tikzpicture}[x=0.75pt,y=0.75pt,yscale=-1,xscale=1]

\draw  [fill={rgb, 255:red, 0; green, 147; blue, 0 }  ,fill opacity=1 ][line width=0.75]  (275.29,237.49) .. controls (275.29,233.9) and (278.23,230.99) .. (281.84,230.99) .. controls (285.46,230.99) and (288.4,233.9) .. (288.4,237.49) .. controls (288.4,241.08) and (285.46,243.99) .. (281.84,243.99) .. controls (278.23,243.99) and (275.29,241.08) .. (275.29,237.49) -- cycle ;
\draw  [fill={rgb, 255:red, 28; green, 20; blue, 199 }  ,fill opacity=1 ][line width=0.75]  (231.29,276.49) .. controls (231.29,272.9) and (234.23,269.99) .. (237.84,269.99) .. controls (241.46,269.99) and (244.4,272.9) .. (244.4,276.49) .. controls (244.4,280.08) and (241.46,282.99) .. (237.84,282.99) .. controls (234.23,282.99) and (231.29,280.08) .. (231.29,276.49) -- cycle ;
\draw  [fill={rgb, 255:red, 28; green, 20; blue, 199 }  ,fill opacity=1 ][line width=0.75]  (232.29,329.49) .. controls (232.29,325.9) and (235.23,322.99) .. (238.84,322.99) .. controls (242.46,322.99) and (245.4,325.9) .. (245.4,329.49) .. controls (245.4,333.08) and (242.46,335.99) .. (238.84,335.99) .. controls (235.23,335.99) and (232.29,333.08) .. (232.29,329.49) -- cycle ;
\draw  [fill={rgb, 255:red, 28; green, 20; blue, 199 }  ,fill opacity=1 ][line width=0.75]  (321.29,275.49) .. controls (321.29,271.9) and (324.23,268.99) .. (327.84,268.99) .. controls (331.46,268.99) and (334.4,271.9) .. (334.4,275.49) .. controls (334.4,279.08) and (331.46,281.99) .. (327.84,281.99) .. controls (324.23,281.99) and (321.29,279.08) .. (321.29,275.49) -- cycle ;
\draw  [fill={rgb, 255:red, 28; green, 20; blue, 199 }  ,fill opacity=1 ][line width=0.75]  (325.29,328.49) .. controls (325.29,324.9) and (328.23,321.99) .. (331.84,321.99) .. controls (335.46,321.99) and (338.4,324.9) .. (338.4,328.49) .. controls (338.4,332.08) and (335.46,334.99) .. (331.84,334.99) .. controls (328.23,334.99) and (325.29,332.08) .. (325.29,328.49) -- cycle ;
\draw    (238.4,288.99) -- (237.69,312.98) ;
\draw [shift={(237.6,315.98)}, rotate = 271.69] [fill={rgb, 255:red, 0; green, 0; blue, 0 }  ][line width=0.08]  [draw opacity=0] (10.72,-5.15) -- (0,0) -- (10.72,5.15) -- (7.12,0) -- cycle    ;
\draw    (250.4,275.99) -- (315.4,275.99) ;
\draw [shift={(318.4,275.99)}, rotate = 180] [fill={rgb, 255:red, 0; green, 0; blue, 0 }  ][line width=0.08]  [draw opacity=0] (10.72,-5.15) -- (0,0) -- (10.72,5.15) -- (7.12,0) -- cycle    ;
\draw    (330.4,288.99) -- (330.4,310.99) ;
\draw [shift={(330.4,313.99)}, rotate = 270] [fill={rgb, 255:red, 0; green, 0; blue, 0 }  ][line width=0.08]  [draw opacity=0] (10.72,-5.15) -- (0,0) -- (10.72,5.15) -- (7.12,0) -- cycle    ;
\draw    (284.4,248.38) .. controls (304.49,294.09) and (308.29,318.06) .. (252.2,326.79) ;
\draw [shift={(249.6,327.18)}, rotate = 351.87] [fill={rgb, 255:red, 0; green, 0; blue, 0 }  ][line width=0.08]  [draw opacity=0] (10.72,-5.15) -- (0,0) -- (10.72,5.15) -- (7.12,0) -- cycle    ;

\draw (258.6,215.4) node [anchor=north west][inner sep=0.75pt]    {$\mathbf{1}$};
\draw (211.2,267.6) node [anchor=north west][inner sep=0.75pt]    {$\mathbf{2}$};
\draw (344.4,270.2) node [anchor=north west][inner sep=0.75pt]    {$\mathbf{3}$};
\draw (209,321.4) node [anchor=north west][inner sep=0.75pt]    {$\mathbf{4}$};
\draw (352,322.4) node [anchor=north west][inner sep=0.75pt]    {$\mathbf{5}$};
\draw (208,290.92) node [anchor=north west][inner sep=0.75pt]  [font=\footnotesize]  {$1$};
\draw (263.6,258.52) node [anchor=north west][inner sep=0.75pt]  [font=\footnotesize]  {$1$};
\draw (342,296.92) node [anchor=north west][inner sep=0.75pt]  [font=\footnotesize]  {$-1$};

\draw (271.6,293.32) node [anchor=north west][inner sep=0.75pt]  [font=\footnotesize]  {$1$};

\end{tikzpicture}

\caption{$\mathcal{G}_2(\mathcal{A}_2,\mathcal{B})$}
\end{subfigure}
\begin{subfigure}{0.16\textwidth}
\centering

\tikzset{every picture/.style={scale=0.65pt}} 

\begin{tikzpicture}[x=0.75pt,y=0.75pt,yscale=-1,xscale=1]

\draw  [fill={rgb, 255:red, 0; green, 147; blue, 0 }  ,fill opacity=1 ][line width=0.75]  (457.29,257.69) .. controls (457.29,254.1) and (460.23,251.19) .. (463.84,251.19) .. controls (467.46,251.19) and (470.4,254.1) .. (470.4,257.69) .. controls (470.4,261.28) and (467.46,264.19) .. (463.84,264.19) .. controls (460.23,264.19) and (457.29,261.28) .. (457.29,257.69) -- cycle ;
\draw  [fill={rgb, 255:red, 28; green, 20; blue, 199 }  ,fill opacity=1 ][line width=0.75]  (413.29,296.69) .. controls (413.29,293.1) and (416.23,290.19) .. (419.84,290.19) .. controls (423.46,290.19) and (426.4,293.1) .. (426.4,296.69) .. controls (426.4,300.28) and (423.46,303.19) .. (419.84,303.19) .. controls (416.23,303.19) and (413.29,300.28) .. (413.29,296.69) -- cycle ;
\draw  [fill={rgb, 255:red, 28; green, 20; blue, 199 }  ,fill opacity=1 ][line width=0.75]  (414.29,349.69) .. controls (414.29,346.1) and (417.23,343.19) .. (420.84,343.19) .. controls (424.46,343.19) and (427.4,346.1) .. (427.4,349.69) .. controls (427.4,353.28) and (424.46,356.19) .. (420.84,356.19) .. controls (417.23,356.19) and (414.29,353.28) .. (414.29,349.69) -- cycle ;
\draw  [fill={rgb, 255:red, 28; green, 20; blue, 199 }  ,fill opacity=1 ][line width=0.75]  (503.29,295.69) .. controls (503.29,292.1) and (506.23,289.19) .. (509.84,289.19) .. controls (513.46,289.19) and (516.4,292.1) .. (516.4,295.69) .. controls (516.4,299.28) and (513.46,302.19) .. (509.84,302.19) .. controls (506.23,302.19) and (503.29,299.28) .. (503.29,295.69) -- cycle ;
\draw  [fill={rgb, 255:red, 28; green, 20; blue, 199 }  ,fill opacity=1 ][line width=0.75]  (507.29,348.69) .. controls (507.29,345.1) and (510.23,342.19) .. (513.84,342.19) .. controls (517.46,342.19) and (520.4,345.1) .. (520.4,348.69) .. controls (520.4,352.28) and (517.46,355.19) .. (513.84,355.19) .. controls (510.23,355.19) and (507.29,352.28) .. (507.29,348.69) -- cycle ;
\draw    (435.4,296.19) -- (500.4,296.19) ;
\draw [shift={(432.4,296.19)}, rotate = 0] [fill={rgb, 255:red, 0; green, 0; blue, 0 }  ][line width=0.08]  [draw opacity=0] (10.72,-5.15) -- (0,0) -- (10.72,5.15) -- (7.12,0) -- cycle    ;
\draw    (433.4,348.19) -- (498.4,348.19) ;
\draw [shift={(501.4,348.19)}, rotate = 180] [fill={rgb, 255:red, 0; green, 0; blue, 0 }  ][line width=0.08]  [draw opacity=0] (10.72,-5.15) -- (0,0) -- (10.72,5.15) -- (7.12,0) -- cycle    ;

\draw (441.6,235.6) node [anchor=north west][inner sep=0.75pt]    {$\mathbf{1}$};
\draw (393.2,287.8) node [anchor=north west][inner sep=0.75pt]    {$\mathbf{2}$};
\draw (526.4,290.4) node [anchor=north west][inner sep=0.75pt]    {$\mathbf{3}$};
\draw (391,341.6) node [anchor=north west][inner sep=0.75pt]    {$\mathbf{4}$};
\draw (534,342.6) node [anchor=north west][inner sep=0.75pt]    {$\mathbf{5}$};
\draw (458,276.12) node [anchor=north west][inner sep=0.75pt]  [font=\footnotesize]  {$1$};
\draw (457,330.12) node [anchor=north west][inner sep=0.75pt]  [font=\footnotesize]  {$-1$};

\end{tikzpicture}
\caption{$\mathcal{G}_3(\mathcal{A}_3,\mathcal{B})$}
\end{subfigure}
\caption{Temporally switching network  $\{\mathcal{A}_1 \rightarrow \mathcal{A}_2 \rightarrow \mathcal{A}_3\}$}
\label{egb}
\end{figure}

For snapshot $3$, the entry $\mathcal{C}_3(5,2) = -\frac{T_2^2}{2} - T_3 T_2$ in the controllability matrix given below reflects two temporal walks from node $2$ to node $5$: $\mathcal{G}_2:\{2 \rightarrow 3 \rightarrow 5\}=-\frac{T_2^2}{2}$ and $\mathcal{G}_2:\{2 \rightarrow 4\} \rightarrow \mathcal{G}_3:\{4 \rightarrow 5\}=- T_2 T_3$.
\vspace{-2mm}
{ \small
\[C_3 =
\begin{bmatrix}
1 & 0 & \cdots & 1 & 0 & \cdots & 1 &  \cdots \\
0 & T_2 T_3 + 1 & \cdots & 0 & 0 & \cdots & 0  & \cdots \\
0 & T_2 & \cdots & 0 & 0 & \cdots & 0  & \cdots \\
T_2 & T_2 & \cdots & 0 & 1 & \cdots & 0  & \cdots \\
- T_2 T_3 & -\frac{T_2^2}{2} - T_3 T_2 & \cdots & 0 & -T_3 & \cdots & 0  & \cdots
\end{bmatrix}
\]
}
 Note that the pair $(\mathcal{A}_1,\mathcal{B})$ is active at snapshot $1$, and since the network evolves temporally, the weight of each edge depends on the duration of the snapshot in which it appears. Specifically, $\mathcal{G}_2:\{2 \rightarrow 3 \rightarrow 5\}$ is a length-$2$ path occurring entirely within snapshot $\mathcal{G}_2$, contributing weight $-\frac{T_2^2}{2}$, while $\mathcal{G}_2:\{2 \rightarrow 4\} \rightarrow \mathcal{G}_3:\{4 \rightarrow 5\}$ is a temporal path spanning snapshots $\mathcal{G}_2$ and $\mathcal{G}_3$, contributing weight $-T_2 T_3$. The entry is thus the sum of the two path products: $-\frac{T_2^2}{2} - T_2 T_3$.

The following theorem establishes the main result of this section based on the above graph-theoretic results.

\begin{theorem}
    Consider a temporally switching network with switching sequence $\{\mathcal{G}_1 \rightarrow \mathcal{G}_2 \cdots \rightarrow \mathcal{G}_N\}$ defined over the interval $[t_0,t_N=t_f)$. Let $C$ denote the controllability matrix associated with the temporal network. If, for every node, there exists a path from the leader node in at least one snapshot of the temporal network such that the sign of all such path products is the same, then the network is herdable.
\end{theorem}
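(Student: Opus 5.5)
The plan is to reduce the statement to Proposition~\ref{prop:uni}. We exhibit, for every node $i$, a nonzero entry of $\mathcal{C}(t_f)$ in row $i$ lying in a column whose nonzero entries all share a common sign. Then we pick $\boldsymbol{\delta}$ supported on those columns with signs matching that common sign, so that $\mathcal{C}(t_f)\boldsymbol{\delta}>0$. The natural columns to use are the first columns of the blocks in \eqref{C-expo}. By Remark~\ref{lem1}, the column $pn+1$ of block $p+1$ records exactly the nodes reachable from the leader by temporal walks that start in snapshot $p+1$ and are propagated through the subsequent state transition matrices $e^{\mathcal{A}_rT_r}$.

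\emph{Step 1: express entries as sums of walk products.} For each block $j$, expand $e^{\mathcal{A}_rT_r}=\sum_k \mathcal{A}_r^kT_r^k/k!$. Using Remark~\ref{Remark-path} repeatedly, every entry $[(\prod e^{\mathcal{A}_rT_r})\mathcal{A}_j^{l}\mathcal{B}]_{(i)}$ becomes a sum over temporal walks from the leader to node $i$. Each walk contributes its path product times a strictly positive coefficient, namely $b_j$ times a product of $T_r^{k}/k!$ factors. The hypothesis should be read as saying that all path products of walks from the leader share one sign $\sigma$, and that each node is reached by at least one such walk.

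\emph{Step 2: unisigned columns.} Every term in a given column then has sign $\sigma\,\mathrm{sign}(b_j)$, because the time coefficients are positive and $b_j$ is constant along the column. No cancellation can occur, so the column is unisigned. Each entry is also nonzero exactly when at least one walk exists, since a sum of same-signed nonzero terms is nonzero.

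\emph{Step 3: cover every row.} For node $i$, take the snapshot $j$ and walk length $l$ given by the hypothesis. The corresponding entry in column $(j-1)n+l+1$ is nonzero. It stays nonzero after propagation through $e^{\mathcal{A}_rT_r}$, because the identity term of each exponential keeps the walk and all other contributions carry the same sign. Proposition~\ref{prop:uni} then yields herdability.

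The main obstacle is Step 2. We must ensure that the \emph{same} column cannot mix walks of opposite sign, which requires reading the hypothesis as covering all walks from the leader, including the cross-snapshot ones generated by the exponentials. We also need the gains $b_j$ not to flip signs inconsistently across columns. This is harmless, since the sign of $\delta$ is chosen column by column. If the hypothesis only controls one chosen path per node, the fallback is to restrict to column $(j-1)n+l+1$ with $N=j$-type propagation and argue via the identity term. That argument is weaker, though, so the proof will state the all-walks interpretation explicitly.
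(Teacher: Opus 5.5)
Your argument has the same skeleton as the paper's proof. You Taylor-expand the state transition matrices, read each entry of $\mathcal{C}(t_f)$ as a positively weighted sum of temporal-walk products from the leader, deduce unisigned columns, and apply Proposition~\ref{prop:uni}. Like the paper's proof, it needs the hypothesis read as ``every temporal walk from the leader has the same sign.'' You are right to make this explicit, because the one-path-per-node reading is false. Take a single snapshot with $1\to2$, $1\to3$, $2\to4$ of weight $+1$ and $3\to4$ of weight $-1$: every node has a positive path, yet the fourth row of $\mathcal{C}$ vanishes.

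The substantive difference is which columns you use, and yours are the right ones. The paper uses only the first column of each block, $\bigl(\prod_{r>j}e^{\mathcal{A}_rT_r}\bigr)\mathcal{B}$. That column contains no edge of $\mathcal{G}_j$, so walks that begin in $\mathcal{G}_1$ never appear in any of the paper's columns. In Fig.~\ref{ega}, for instance, all three first columns are multiples of $e_1$, and the follower entries appear only in $e^{\mathcal{A}_3T_3}e^{\mathcal{A}_2T_2}\mathcal{A}_1\mathcal{B}$. Your columns $(j-1)n+l+1$, i.e.\ $\bigl(\prod_{r>j}e^{\mathcal{A}_rT_r}\bigr)\mathcal{A}_j^{l}\mathcal{B}$, do capture these walks, so your Step~3 closes a hole in the paper's argument.

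Two small repairs are needed:
\begin{enumerate}
\item Your opening paragraph repeats the paper's description of column $pn+1$ as recording walks that start in snapshot $p+1$. That is inaccurate for the reason just given, so bring it in line with Step~3.
\item Step~2 overclaims when $l=0$. The identity terms put $b_j$ in the leader row with sign $\operatorname{sign}(b_j)$, not $\sigma\operatorname{sign}(b_j)$. So when $\sigma=-1$ and $j<N$, that column need not be unisigned. Cover the leader row with column $(N-1)n+1$, which is exactly $b_Ne_1$, and use only columns with $l\ge 1$ for the followers.
\end{enumerate}
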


        \begin{proof}The predefined switching sequence is $\{\mathcal{G}_1 \rightarrow \mathcal{G}_2 \cdots \rightarrow \mathcal{G}_N\}$ therefore the corresponding system pairs  are $\{(\mathcal{A}_1,\mathcal{B}) \rightarrow (\mathcal{A}_2,\mathcal{B}) \cdots \rightarrow (\mathcal{A}_N,\mathcal{B})\}$. Let $\mathcal{C}(t_f)$ denote the associated controllability matrix defined in \eqref{C-temp}. Let $[pn+1,\, pn+n]$ denote the block of columns of the controllability matrix corresponding to the $(p+1)^{\text{th}}$ snapshot.  For the $(p+1)^{th}$ snapshot, the corresponding block in the controllability matrix \eqref{C-temp} occupies the columns $[pn+1,pn+n]$. The expression for this block is 
\[ \hspace{-1.5mm}
\left(\prod_{k=p+1}^{N} e^{\mathcal{A}_{N-k+(p+1)} T_{N-k+(p+1)}}\right) C_p = 
e^{\mathcal{A}_N T_N} \cdot \cdot e^{\mathcal{A}_{p+1} T_{p+1}} C_p .
\]
Recall that the controllability matrix of the system $\mathcal{A}_p,\mathcal{B}$ is \(
C_p = [\, \mathcal{B} \;\; \mathcal{A}_p \mathcal{B} \;\; \cdots \;\; \mathcal{A}_p^{n-1}\mathcal{B} \,].
\) Therefore, the first column of this block, denoted by $\mathcal{C}(t_f)_{(:,pn+1)}$, is 
\[
C_{(:,pn+1)} =
e^{\mathcal{A}_N T_N}.. e^{\mathcal{A}_{p+1} T_{p+1}} \mathcal{B}
=
\left(\prod_{k=p+1}^{N} e^{\mathcal{A}_k T_k}\right) \mathcal{B} .
\]
Using Taylor series expansion of the matrix exponential \(e^{\mathcal{A}_k T_k}
=
\sum_{m=0}^{\infty} \frac{\mathcal{A}_k^{\,m} T_k^{\,m}}{m!},~~ k \in \{p+1,\dots,N\},\) the above expression becomes a sum of terms involving products of powers of adjacency matrices multiplies by $\mathcal{B}$. Each term corresponds to a sequence of matrix product form 
\[
\mathcal{A}_N^{m_N}\mathcal{A}_{N-1}^{m_{N-1}}\cdots
\mathcal{A}_{p+1}^{m_{p+1}} \mathcal{B}.
\]

We know that $[\mathcal{A}^k_{(i,j)} ]\ne 0$, then there exists a walk of length $k$ from node $i$ to node $j$. From Remark \ref{Remark-path} we know if $[\mathcal{A}_n^k \mathcal{A}_m^l]_{(i,j)} \neq 0$, then there exists a temporal walk from node $i$ to node $j$ of length $k+l$. Similarly, if $[\mathcal{A}^k\mathcal{B}]_i \ne 0$ then there exists a path from the leader node to node $i$ of length $k$. Hence, each nonzero entry in the vector 
\(\left(\prod_{k=p+1}^{N} e^{\mathcal{A}_k T_k}\right)\cdot \mathcal{B} \) represents the contribution of at least one temporal walk from the leader node in all snapshots. If all such temporal walks correspond to each node has same sign, then the resulting vector is unisigned. Consequently, by Proposition \ref{prop:uni}, the temporal network is herdable.\end{proof}

\begin{proposition}
    Consider a temporally switching network with a switching sequence $\{\mathcal{G}_1 \rightarrow \mathcal{G}_2 \cdots \rightarrow \mathcal{G}_N\}$ defined over the interval $[T_0,T_N=t_f)$. If all the walks associated with nodes that are reachable from the leader through walks of length $k-1$ in the snapshot $p+1$,  as well as the nodes that are reachable through temporal walks originating from nodes that are at distance $k-1$ from the leader, have the same path product sign, then the system is herdable. 
\end{proposition}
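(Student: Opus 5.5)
The plan is to generalize the proof of the preceding theorem from the first column of each block to an arbitrary column $pn+k$, and then conclude with Proposition~\ref{prop:uni}.

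\textbf{Step 1: write the column explicitly.} Fix a snapshot index $p+1$ and a column index $k\in\{1,\dots,n\}$ within the block $[pn+1,pn+n]$ of $\mathcal{C}(t_f)$ in \eqref{C-expo}. By \eqref{C-temp}, this column is
\[
\mathcal{C}(t_f)_{(:,pn+k)}=e^{\mathcal{A}_N T_N}\cdots e^{\mathcal{A}_{p+2}T_{p+2}}\,\mathcal{A}_{p+1}^{k-1}\mathcal{B},
\]
where $C_{p+1}=[\mathcal{B}\;\mathcal{A}_{p+1}\mathcal{B}\cdots]$. The nonzero entries of $\mathcal{A}_{p+1}^{k-1}\mathcal{B}$ mark the nodes at walk length $k-1$ from the leader in snapshot $p+1$. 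Each such entry is a sum of path products of those walks.

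\textbf{Step 2: expand the exponentials.} Expand each $e^{\mathcal{A}_rT_r}$ in its Taylor series, exactly as in the proof of the theorem. The column then becomes a sum of terms
\[
\frac{T_N^{m_N}\cdots T_{p+2}^{m_{p+2}}}{m_N!\cdots m_{p+2}!}\,\mathcal{A}_N^{m_N}\cdots\mathcal{A}_{p+2}^{m_{p+2}}\mathcal{A}_{p+1}^{k-1}\mathcal{B}.
\]
The scalar coefficients are all positive because $T_r>0$.

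\textbf{Step 3: read the entries as walks.} By Remark~\ref{Remark-path}, applied iteratively, and Remark~\ref{lem1}, the $i$th entry of each such term is a sum of path products over temporal walks. Each walk goes from the leader to a node $q$ at distance $k-1$ in $\mathcal{G}_{p+1}$, and then from $q$ to $i$ through snapshots $p+2,\dots,N$ in nondecreasing order.

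By hypothesis, all of these walks, together with the initial segments ending at the distance-$(k-1)$ nodes, have a common path product sign $s$. Since every coefficient is positive, each nonzero entry of the column has sign $s$ and no cancellation can occur. Hence column $pn+k$ is unisigned.

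\textbf{Step 4: apply Proposition~\ref{prop:uni}.} I expect this step to be the main obstacle. Proposition~\ref{prop:uni} requires that every row of $\mathcal{C}(t_f)$ has a nonzero entry in some unisigned column. The statement is silent on coverage, so I would read it, as with the preceding theorem, as implicitly assuming that every node is reached by such a walk for some pair $(p,k)$. This is consistent with the necessary input-connectivity condition.

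I would also note that the hypothesis need only hold for the columns used. Choosing $\boldsymbol{\delta}$ supported on those unisigned columns, with signs matching $s$ as in the proof of Proposition~\ref{prop:uni}, gives $\mathcal{C}(t_f)\boldsymbol{\delta}>0$. Herdability then follows from Corollary~\ref{corollary1}.
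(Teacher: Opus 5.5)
Your proposal is correct and takes the paper's route: the paper's proof is the one-line remark that, by Remark~\ref{lem1}, column $pn+k$ collects exactly the temporal walks through the distance-$(k-1)$ nodes of snapshot $p+1$, so it is unisigned under the hypothesis, and Proposition~\ref{prop:uni} then gives herdability. You add the Taylor-expansion details behind that remark (positive coefficients, so no cancellation), and you make explicit the row-coverage assumption that the paper's proof leaves implicit but needs for Proposition~\ref{prop:uni}.
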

     
\begin{proof}
    The proof is a direct consequence of Remark~\ref{lem1} and Proposition~\ref{prop:uni}.
\end{proof}
\vspace{-2.5mm}
In the above results, herdability is analyzed for a given realization. In the following section, we examine how the magnitudes of edge weights influence herdability and subsequently study $\mathcal{SS}$ herdability.

\section{A Sufficient Condition for $\mathcal{SS}$ Herdability of Temporally Switching Networks}\label{s5}

It is well known that a directed graph is structurally controllable if all the nodes are spanned by a cactus, which is a disjoint union of stems and buds \cite{lin1974structural}. In our previous work, we derived a graph theoretical condition for $\mathcal{SS}$ herdability of an arbitrary static digraph. Building on the same, in this section, we investigate a similar graph theoretic condition whose existence can guarantee $\mathcal{SS}$ herdability in temporal networks. To motivate the analysis of $\mathcal{SS}$ herdability consider the following example.
\begin{example}
   Consider a temporally switching digraph as shown in the Fig.\ref{ega}.\vspace{-3mm} 
  \begin{figure}[ht]

\begin{subfigure}{0.16\textwidth}
\centering
\tikzset{every picture/.style={scale=0.65pt}} 

\begin{tikzpicture}[x=0.75pt,y=0.75pt,yscale=-1,xscale=1]

\draw  [fill={rgb, 255:red, 0; green, 147; blue, 0 }  ,fill opacity=1 ][line width=0.75]  (128.29,95.49) .. controls (128.29,91.9) and (131.23,88.99) .. (134.84,88.99) .. controls (138.46,88.99) and (141.4,91.9) .. (141.4,95.49) .. controls (141.4,99.08) and (138.46,101.99) .. (134.84,101.99) .. controls (131.23,101.99) and (128.29,99.08) .. (128.29,95.49) -- cycle ;
\draw  [fill={rgb, 255:red, 28; green, 20; blue, 199 }  ,fill opacity=1 ][line width=0.75]  (84.29,134.49) .. controls (84.29,130.9) and (87.23,127.99) .. (90.84,127.99) .. controls (94.46,127.99) and (97.4,130.9) .. (97.4,134.49) .. controls (97.4,138.08) and (94.46,140.99) .. (90.84,140.99) .. controls (87.23,140.99) and (84.29,138.08) .. (84.29,134.49) -- cycle ;
\draw  [fill={rgb, 255:red, 28; green, 20; blue, 199 }  ,fill opacity=1 ][line width=0.75]  (85.29,187.49) .. controls (85.29,183.9) and (88.23,180.99) .. (91.84,180.99) .. controls (95.46,180.99) and (98.4,183.9) .. (98.4,187.49) .. controls (98.4,191.08) and (95.46,193.99) .. (91.84,193.99) .. controls (88.23,193.99) and (85.29,191.08) .. (85.29,187.49) -- cycle ;
\draw  [fill={rgb, 255:red, 28; green, 20; blue, 199 }  ,fill opacity=1 ][line width=0.75]  (174.29,133.49) .. controls (174.29,129.9) and (177.23,126.99) .. (180.84,126.99) .. controls (184.46,126.99) and (187.4,129.9) .. (187.4,133.49) .. controls (187.4,137.08) and (184.46,139.99) .. (180.84,139.99) .. controls (177.23,139.99) and (174.29,137.08) .. (174.29,133.49) -- cycle ;
\draw  [fill={rgb, 255:red, 28; green, 20; blue, 199 }  ,fill opacity=1 ][line width=0.75]  (175.29,186.49) .. controls (175.29,182.9) and (178.23,179.99) .. (181.84,179.99) .. controls (185.46,179.99) and (188.4,182.9) .. (188.4,186.49) .. controls (188.4,190.08) and (185.46,192.99) .. (181.84,192.99) .. controls (178.23,192.99) and (175.29,190.08) .. (175.29,186.49) -- cycle ;
\draw    (128.4,102.99) -- (99.69,127.06) ;
\draw [shift={(97.4,128.99)}, rotate = 320.01] [fill={rgb, 255:red, 0; green, 0; blue, 0 }  ][line width=0.08]  [draw opacity=0] (10.72,-5.15) -- (0,0) -- (10.72,5.15) -- (7.12,0) -- cycle    ;
\draw    (144.2,102.8) -- (171.02,123.37) ;
\draw [shift={(173.4,125.2)}, rotate = 217.49] [fill={rgb, 255:red, 0; green, 0; blue, 0 }  ][line width=0.08]  [draw opacity=0] (10.72,-5.15) -- (0,0) -- (10.72,5.15) -- (7.12,0) -- cycle    ;

\draw (120.6,67.4) node [anchor=north west][inner sep=0.75pt]    {$1$};
\draw (64.2,125.6) node [anchor=north west][inner sep=0.75pt]    {$2$};
\draw (197.4,128.2) node [anchor=north west][inner sep=0.75pt]    {$3$};
\draw (62,179.4) node [anchor=north west][inner sep=0.75pt]    {$4$};
\draw (202,180.4) node [anchor=north west][inner sep=0.75pt]    {$5$};
\draw (80.8,94.52) node [anchor=north west][inner sep=0.75pt]  [font=\footnotesize]  {$a^1_{21}$};
\draw (165,93.72) node [anchor=north west][inner sep=0.75pt]  [font=\footnotesize]  {$a^1_{31}$};

\end{tikzpicture}

\caption{$\mathcal{G}_1(\mathcal{A}_1,\mathcal{B})$}
\end{subfigure}
\centering
\begin{subfigure}{0.15\textwidth}
\centering

\tikzset{every picture/.style={scale=0.65pt}} 

\begin{tikzpicture}[x=0.75pt,y=0.75pt,yscale=-1,xscale=1]

\draw  [fill={rgb, 255:red, 0; green, 147; blue, 0 }  ,fill opacity=1 ][line width=0.75]  (348.29,95.49) .. controls (348.29,91.9) and (351.23,88.99) .. (354.84,88.99) .. controls (358.46,88.99) and (361.4,91.9) .. (361.4,95.49) .. controls (361.4,99.08) and (358.46,101.99) .. (354.84,101.99) .. controls (351.23,101.99) and (348.29,99.08) .. (348.29,95.49) -- cycle ;
\draw  [fill={rgb, 255:red, 28; green, 20; blue, 199 }  ,fill opacity=1 ][line width=0.75]  (304.29,134.49) .. controls (304.29,130.9) and (307.23,127.99) .. (310.84,127.99) .. controls (314.46,127.99) and (317.4,130.9) .. (317.4,134.49) .. controls (317.4,138.08) and (314.46,140.99) .. (310.84,140.99) .. controls (307.23,140.99) and (304.29,138.08) .. (304.29,134.49) -- cycle ;
\draw  [fill={rgb, 255:red, 28; green, 20; blue, 199 }  ,fill opacity=1 ][line width=0.75]  (305.29,187.49) .. controls (305.29,183.9) and (308.23,180.99) .. (311.84,180.99) .. controls (315.46,180.99) and (318.4,183.9) .. (318.4,187.49) .. controls (318.4,191.08) and (315.46,193.99) .. (311.84,193.99) .. controls (308.23,193.99) and (305.29,191.08) .. (305.29,187.49) -- cycle ;
\draw  [fill={rgb, 255:red, 28; green, 20; blue, 199 }  ,fill opacity=1 ][line width=0.75]  (394.29,133.49) .. controls (394.29,129.9) and (397.23,126.99) .. (400.84,126.99) .. controls (404.46,126.99) and (407.4,129.9) .. (407.4,133.49) .. controls (407.4,137.08) and (404.46,139.99) .. (400.84,139.99) .. controls (397.23,139.99) and (394.29,137.08) .. (394.29,133.49) -- cycle ;
\draw  [fill={rgb, 255:red, 28; green, 20; blue, 199 }  ,fill opacity=1 ][line width=0.75]  (398.29,186.49) .. controls (398.29,182.9) and (401.23,179.99) .. (404.84,179.99) .. controls (408.46,179.99) and (411.4,182.9) .. (411.4,186.49) .. controls (411.4,190.08) and (408.46,192.99) .. (404.84,192.99) .. controls (401.23,192.99) and (398.29,190.08) .. (398.29,186.49) -- cycle ;
\draw    (311.4,146.99) -- (310.69,170.98) ;
\draw [shift={(310.6,173.98)}, rotate = 271.69] [fill={rgb, 255:red, 0; green, 0; blue, 0 }  ][line width=0.08]  [draw opacity=0] (10.72,-5.15) -- (0,0) -- (10.72,5.15) -- (7.12,0) -- cycle    ;
\draw    (395,144.4) .. controls (380.1,177.33) and (369.82,181.44) .. (325.36,184.97) ;
\draw [shift={(322.6,185.18)}, rotate = 355.62] [fill={rgb, 255:red, 0; green, 0; blue, 0 }  ][line width=0.08]  [draw opacity=0] (10.72,-5.15) -- (0,0) -- (10.72,5.15) -- (7.12,0) -- cycle    ;

\draw (335.6,70.4) node [anchor=north west][inner sep=0.75pt]    {$1$};
\draw (284.2,125.6) node [anchor=north west][inner sep=0.75pt]    {$2$};
\draw (417.4,128.2) node [anchor=north west][inner sep=0.75pt]    {$3$};
\draw (282,179.4) node [anchor=north west][inner sep=0.75pt]    {$4$};
\draw (425,180.4) node [anchor=north west][inner sep=0.75pt]    {$5$};
\draw (278,146.92) node [anchor=north west][inner sep=0.75pt]  [font=\footnotesize]  {$a^2_{42}$};
\draw (330.6,153.72) node [anchor=north west][inner sep=0.75pt]  [font=\footnotesize]  {$-a^2_{43}$};

\end{tikzpicture}

\caption{$\mathcal{G}_2(\mathcal{A}_2,\mathcal{B})$}
\end{subfigure}
\begin{subfigure}{0.16\textwidth}
\centering

\tikzset{every picture/.style={scale=0.65pt}} 

\begin{tikzpicture}[x=0.75pt,y=0.75pt,yscale=-1,xscale=1]

\draw  [fill={rgb, 255:red, 0; green, 147; blue, 0 }  ,fill opacity=1 ][line width=0.75]  (547.29,96.49) .. controls (547.29,92.9) and (550.23,89.99) .. (553.84,89.99) .. controls (557.46,89.99) and (560.4,92.9) .. (560.4,96.49) .. controls (560.4,100.08) and (557.46,102.99) .. (553.84,102.99) .. controls (550.23,102.99) and (547.29,100.08) .. (547.29,96.49) -- cycle ;
\draw  [fill={rgb, 255:red, 28; green, 20; blue, 199 }  ,fill opacity=1 ][line width=0.75]  (503.29,135.49) .. controls (503.29,131.9) and (506.23,128.99) .. (509.84,128.99) .. controls (513.46,128.99) and (516.4,131.9) .. (516.4,135.49) .. controls (516.4,139.08) and (513.46,141.99) .. (509.84,141.99) .. controls (506.23,141.99) and (503.29,139.08) .. (503.29,135.49) -- cycle ;
\draw  [fill={rgb, 255:red, 28; green, 20; blue, 199 }  ,fill opacity=1 ][line width=0.75]  (504.29,188.49) .. controls (504.29,184.9) and (507.23,181.99) .. (510.84,181.99) .. controls (514.46,181.99) and (517.4,184.9) .. (517.4,188.49) .. controls (517.4,192.08) and (514.46,194.99) .. (510.84,194.99) .. controls (507.23,194.99) and (504.29,192.08) .. (504.29,188.49) -- cycle ;
\draw  [fill={rgb, 255:red, 28; green, 20; blue, 199 }  ,fill opacity=1 ][line width=0.75]  (593.29,134.49) .. controls (593.29,130.9) and (596.23,127.99) .. (599.84,127.99) .. controls (603.46,127.99) and (606.4,130.9) .. (606.4,134.49) .. controls (606.4,138.08) and (603.46,140.99) .. (599.84,140.99) .. controls (596.23,140.99) and (593.29,138.08) .. (593.29,134.49) -- cycle ;
\draw  [fill={rgb, 255:red, 28; green, 20; blue, 199 }  ,fill opacity=1 ][line width=0.75]  (597.29,187.49) .. controls (597.29,183.9) and (600.23,180.99) .. (603.84,180.99) .. controls (607.46,180.99) and (610.4,183.9) .. (610.4,187.49) .. controls (610.4,191.08) and (607.46,193.99) .. (603.84,193.99) .. controls (600.23,193.99) and (597.29,191.08) .. (597.29,187.49) -- cycle ;
\draw    (525.4,134.99) -- (590.4,134.99) ;
\draw [shift={(522.4,134.99)}, rotate = 0] [fill={rgb, 255:red, 0; green, 0; blue, 0 }  ][line width=0.08]  [draw opacity=0] (10.72,-5.15) -- (0,0) -- (10.72,5.15) -- (7.12,0) -- cycle    ;
\draw    (523.4,186.99) -- (588.4,186.99) ;
\draw [shift={(591.4,186.99)}, rotate = 180] [fill={rgb, 255:red, 0; green, 0; blue, 0 }  ][line width=0.08]  [draw opacity=0] (10.72,-5.15) -- (0,0) -- (10.72,5.15) -- (7.12,0) -- cycle    ;

\draw (536.6,70.4) node [anchor=north west][inner sep=0.75pt]    {$1$};
\draw (483.2,126.6) node [anchor=north west][inner sep=0.75pt]    {$2$};
\draw (616.4,129.2) node [anchor=north west][inner sep=0.75pt]    {$3$};
\draw (481,180.4) node [anchor=north west][inner sep=0.75pt]    {$4$};
\draw (624,181.4) node [anchor=north west][inner sep=0.75pt]    {$5$};
\draw (536,111.92) node [anchor=north west][inner sep=0.75pt]  [font=\footnotesize]  {$a^3_{23}$};
\draw (522,162.92) node [anchor=north west][inner sep=0.75pt]  [font=\footnotesize]  {$-a^3
_{54}$};

\end{tikzpicture}

\caption{$\mathcal{G}_3(\mathcal{A}_3,\mathcal{B})$}
\end{subfigure}
\caption{Temporally switching network  $\{\mathcal{A}_1 \rightarrow \mathcal{A}_2 \rightarrow \mathcal{A}_3\}$}
\label{ega}
\end{figure}   
\end{example} \vspace{-2.5mm}
The controllability matrix associated with the temporal network after $N=3^{rd}$ snapshot is given by:
\[
C_3 =
\left[
\begin{array}{c|c|c}
\begin{array}{cc}
1 & 0 \\
0 & T_3 a^1_{31} a^3_{23} + a^1_{21} \\
0 & a^1_{31} \\
0 & T_2(a^2_{42}-a^2_{43}) \\
0 & T_2(a^2_{42}-a^2_{43}) T_3 a^3_{54}
\end{array}
&
\begin{array}{cc}
1 & 0 \\
0 & \cdots \\
0 & \cdots \\
0 & \cdots \\
0 & \cdots
\end{array}
&
\begin{array}{cc}
1 & 0 \\
0 & \cdots \\
0 & \cdots \\
0 & \cdots \\
0 & \cdots
\end{array}
\end{array}
\right]
\]
 where $T_i=[t_{i-1},t_i)$. Based on our assumption that the magnitude of edge weights $a^k_{ij} \in \mathbb{R}^+$, it is evident that the temporal network is herdable only when  $a^2_{42}>a^2_{43}$. When $a^2_{42} = a^2_{43}$, rows $4$ and $5$ vanish identically, and the network fails to be herdable; herdability instead holds whenever $a^2_{42} > a^2_{43}$, under the stated assumption. Hence the given temporal network is $\mathcal{SS}$ herdable. We now formally introduce the definition of $\mathcal{SS}$ herdability for temporal networks.

\begin{definition}
 Given the zero-nonzero patterns of $(\mathcal{A}_i, \mathcal{B})_{i=1}^N$, system~\eqref{sys1} is said to be structurally sign $\mathcal{SS}$ herdable if there exists at least one realization of $(\mathcal{A}_i, \mathcal{B})_{i=1}^N$, consistent with these patterns that is herdable over some temporal sequence.   
\end{definition}

\begin{definition}
Let $(A_i, B_i)_{i=1}^N$ be a sequence of matrices specified by their zero-nonzero patterns.  system~\eqref{sys1} is said to be structurally sign $\mathcal{SS}$ herdable if there exists a sequence of matrices $(\tilde{A}_i, \tilde{B}_i)_{i=1}^N$ such that:
\begin{enumerate}
\item For each $i \in \{1, \dots, N\}$, the matrices $\tilde{A}_i$ and $\tilde{B}_i$ have the same zero-nonzero and sign patterns as $\mathcal{A}_i$ and $\mathcal{B}_i$, respectively; that is, for every nonzero entry, $\operatorname{sign}(\tilde{a}_{ij}^{(k)}) = \operatorname{sign}(a_{ij}^{(k)})$ and $\operatorname{sign}(\tilde{b}_{i}^{(k)}) = \operatorname{sign}(b_{i}^{(k)})$, while their magnitudes may vary.
    \item The resulting system evolving according to the temporal sequence $(\tilde{A}_i, \tilde{B}_i)_{i=1}^N$, is herdable.
\end{enumerate}
\end{definition}

 \subsection{Union Multigraph $\mathcal{G}_\mathfrak{U}(\mathcal{A,B})$}
 Let $\mathcal{G}_i(\mathcal{A},\mathcal{B})$, $i \in \{1,\dots,N\}$,, denote the digraph associated with $i^{th}$
subsystem  $(\mathcal{A}_i,\mathcal{B})$. The union multigraph is defined as \vspace{-1mm}
\[
\mathcal{G}_{\mathfrak{U}}(\mathcal{A},\mathcal{B}) = \biguplus_{i=1}^N \mathcal{G}_i(\mathcal{A}_i,\mathcal{B}),
\]
where $\mathcal{G}_{\mathfrak{U}}(\mathcal{A}_i,\mathcal{B})$ contains the union of the node sets and multiset union of edge sets of all $\mathcal{G}_i$, preserving edge weights and multiplicities. For example, consider the temporal network shown in Fig.~\ref{mul-full}. The corresponding union multigraph $\mathcal{G}_{\mathfrak{U}}(\mathcal{A},\mathcal{B})$ is illustrated in Fig.~\ref{mul-full}(\subref{mul-tog}).

  \begin{figure}[ht]

\begin{subfigure}{0.21\textwidth}
\centering
\tikzset{every picture/.style={scale=0.75pt}} 

\begin{tikzpicture}[x=0.75pt,y=0.75pt,yscale=-1,xscale=1]

\draw [color={rgb, 255:red, 0; green, 0; blue, 0 }  ,draw opacity=1 ]   (145.46,183.65) -- (144.76,207.65) ;
\draw [shift={(144.67,210.65)}, rotate = 271.69] [fill={rgb, 255:red, 0; green, 0; blue, 0 }  ,fill opacity=1 ][line width=0.08]  [draw opacity=0] (10.72,-5.15) -- (0,0) -- (10.72,5.15) -- (7.12,0) -- cycle    ;
\draw  [fill={rgb, 255:red, 0; green, 147; blue, 0 }  ,fill opacity=1 ][line width=0.75]  (181.33,132.49) .. controls (181.33,128.9) and (184.26,125.99) .. (187.88,125.99) .. controls (191.5,125.99) and (194.43,128.9) .. (194.43,132.49) .. controls (194.43,136.08) and (191.5,138.99) .. (187.88,138.99) .. controls (184.26,138.99) and (181.33,136.08) .. (181.33,132.49) -- cycle ;
\draw  [fill={rgb, 255:red, 28; green, 20; blue, 199 }  ,fill opacity=1 ][line width=0.75]  (137.33,171.49) .. controls (137.33,167.9) and (140.26,164.99) .. (143.88,164.99) .. controls (147.5,164.99) and (150.43,167.9) .. (150.43,171.49) .. controls (150.43,175.08) and (147.5,177.99) .. (143.88,177.99) .. controls (140.26,177.99) and (137.33,175.08) .. (137.33,171.49) -- cycle ;
\draw  [fill={rgb, 255:red, 28; green, 20; blue, 199 }  ,fill opacity=1 ][line width=0.75]  (138.33,224.49) .. controls (138.33,220.9) and (141.26,217.99) .. (144.88,217.99) .. controls (148.5,217.99) and (151.43,220.9) .. (151.43,224.49) .. controls (151.43,228.08) and (148.5,230.99) .. (144.88,230.99) .. controls (141.26,230.99) and (138.33,228.08) .. (138.33,224.49) -- cycle ;
\draw  [fill={rgb, 255:red, 28; green, 20; blue, 199 }  ,fill opacity=1 ][line width=0.75]  (227.33,170.49) .. controls (227.33,166.9) and (230.26,163.99) .. (233.88,163.99) .. controls (237.5,163.99) and (240.43,166.9) .. (240.43,170.49) .. controls (240.43,174.08) and (237.5,176.99) .. (233.88,176.99) .. controls (230.26,176.99) and (227.33,174.08) .. (227.33,170.49) -- cycle ;
\draw  [fill={rgb, 255:red, 28; green, 20; blue, 199 }  ,fill opacity=1 ][line width=0.75]  (226.66,220.49) .. controls (226.66,216.9) and (229.59,213.99) .. (233.21,213.99) .. controls (236.83,213.99) and (239.76,216.9) .. (239.76,220.49) .. controls (239.76,224.08) and (236.83,226.99) .. (233.21,226.99) .. controls (229.59,226.99) and (226.66,224.08) .. (226.66,220.49) -- cycle ;
\draw [fill={rgb, 255:red, 210; green, 39; blue, 39 }  ,fill opacity=1 ][line width=0.75]    (181.43,139.99) -- (152.73,164.06) ;
\draw [shift={(150.43,165.99)}, rotate = 320.01] [fill={rgb, 255:red, 0; green, 0; blue, 0 }  ][line width=0.08]  [draw opacity=0] (10.72,-5.15) -- (0,0) -- (10.72,5.15) -- (7.12,0) -- cycle    ;
\draw [fill={rgb, 255:red, 210; green, 39; blue, 39 }  ,fill opacity=1 ][line width=0.75]    (197.23,139.8) -- (224.05,160.37) ;
\draw [shift={(226.43,162.2)}, rotate = 217.49] [fill={rgb, 255:red, 0; green, 0; blue, 0 }  ][line width=0.08]  [draw opacity=0] (10.72,-5.15) -- (0,0) -- (10.72,5.15) -- (7.12,0) -- cycle    ;
\draw [color={rgb, 255:red, 0; green, 0; blue, 0 }  ,draw opacity=1 ]   (157.9,180.6) -- (219.02,211.45) ;
\draw [shift={(221.7,212.8)}, rotate = 206.78] [fill={rgb, 255:red, 0; green, 0; blue, 0 }  ,fill opacity=1 ][line width=0.08]  [draw opacity=0] (10.72,-5.15) -- (0,0) -- (10.72,5.15) -- (7.12,0) -- cycle    ;

\draw (166.63,121.4) node [anchor=north west][inner sep=0.75pt]    {$1$};
\draw (118.23,151.27) node [anchor=north west][inner sep=0.75pt]    {$2$};
\draw (245.1,149.2) node [anchor=north west][inner sep=0.75pt]    {$3$};
\draw (119.03,224.07) node [anchor=north west][inner sep=0.75pt]    {$4$};
\draw (249.03,224.4) node [anchor=north west][inner sep=0.75pt]    {$5$};
\draw (142.52,134.92) node [anchor=north west][inner sep=0.75pt]  [font=\scriptsize]  {$a_{21}^{1}$};
\draw (211.81,133.9) node [anchor=north west][inner sep=0.75pt]  [font=\scriptsize]  {$-a_{31}^{1}$};
\draw (179.52,176.59) node [anchor=north west][inner sep=0.75pt]  [font=\scriptsize]  {$-a_{52}^{1}$};
\draw (117.33,184.85) node [anchor=north west][inner sep=0.75pt]  [font=\scriptsize]  {$a_{42}^{1}$};

\end{tikzpicture}
    
\caption{$\mathcal{G}_1(\mathcal{A}_1,\mathcal{B})$}
\label{mul-1}
\end{subfigure}
\centering
\begin{subfigure}{0.21\textwidth}
\centering

\tikzset{every picture/.style={scale=0.75pt}} 

\begin{tikzpicture}[x=0.75pt,y=0.75pt,yscale=-1,xscale=1]

\draw  [fill={rgb, 255:red, 0; green, 147; blue, 0 }  ,fill opacity=1 ][line width=0.75]  (408.29,133.49) .. controls (408.29,129.9) and (411.23,126.99) .. (414.84,126.99) .. controls (418.46,126.99) and (421.4,129.9) .. (421.4,133.49) .. controls (421.4,137.08) and (418.46,139.99) .. (414.84,139.99) .. controls (411.23,139.99) and (408.29,137.08) .. (408.29,133.49) -- cycle ;
\draw  [fill={rgb, 255:red, 28; green, 20; blue, 199 }  ,fill opacity=1 ][line width=0.75]  (364.29,172.49) .. controls (364.29,168.9) and (367.23,165.99) .. (370.84,165.99) .. controls (374.46,165.99) and (377.4,168.9) .. (377.4,172.49) .. controls (377.4,176.08) and (374.46,178.99) .. (370.84,178.99) .. controls (367.23,178.99) and (364.29,176.08) .. (364.29,172.49) -- cycle ;
\draw  [fill={rgb, 255:red, 28; green, 20; blue, 199 }  ,fill opacity=1 ][line width=0.75]  (454.29,171.49) .. controls (454.29,167.9) and (457.23,164.99) .. (460.84,164.99) .. controls (464.46,164.99) and (467.4,167.9) .. (467.4,171.49) .. controls (467.4,175.08) and (464.46,177.99) .. (460.84,177.99) .. controls (457.23,177.99) and (454.29,175.08) .. (454.29,171.49) -- cycle ;
\draw  [fill={rgb, 255:red, 28; green, 20; blue, 199 }  ,fill opacity=1 ][line width=0.75]  (364.29,225.49) .. controls (364.29,221.9) and (367.23,218.99) .. (370.84,218.99) .. controls (374.46,218.99) and (377.4,221.9) .. (377.4,225.49) .. controls (377.4,229.08) and (374.46,231.99) .. (370.84,231.99) .. controls (367.23,231.99) and (364.29,229.08) .. (364.29,225.49) -- cycle ;
\draw  [fill={rgb, 255:red, 28; green, 20; blue, 199 }  ,fill opacity=1 ][line width=0.75]  (454.29,224.49) .. controls (454.29,220.9) and (457.23,217.99) .. (460.84,217.99) .. controls (464.46,217.99) and (467.4,220.9) .. (467.4,224.49) .. controls (467.4,228.08) and (464.46,230.99) .. (460.84,230.99) .. controls (457.23,230.99) and (454.29,228.08) .. (454.29,224.49) -- cycle ;
\draw [color={rgb, 255:red, 145; green, 39; blue, 39 }  ,draw opacity=1 ]   (446.4,172.99) -- (388,173) ;
\draw [shift={(385,173)}, rotate = 359.99] [fill={rgb, 255:red, 145; green, 39; blue, 39 }  ,fill opacity=1 ][line width=0.08]  [draw opacity=0] (10.72,-5.15) -- (0,0) -- (10.72,5.15) -- (7.12,0) -- cycle    ;
\draw [color={rgb, 255:red, 145; green, 39; blue, 39 }  ,draw opacity=1 ]   (446.3,226) -- (385.3,226) ;
\draw [shift={(449.3,226)}, rotate = 180] [fill={rgb, 255:red, 145; green, 39; blue, 39 }  ,fill opacity=1 ][line width=0.08]  [draw opacity=0] (10.72,-5.15) -- (0,0) -- (10.72,5.15) -- (7.12,0) -- cycle    ;

\draw (389.6,126.4) node [anchor=north west][inner sep=0.75pt]    {$1$};
\draw (344.2,156.6) node [anchor=north west][inner sep=0.75pt]    {$2$};
\draw (473.4,153.2) node [anchor=north west][inner sep=0.75pt]    {$3$};
\draw (344,226.4) node [anchor=north west][inner sep=0.75pt]    {$4$};
\draw (473,223.4) node [anchor=north west][inner sep=0.75pt]    {$5$};
\draw (407.4,152.32) node [anchor=north west][inner sep=0.75pt]  [font=\scriptsize,color={rgb, 255:red, 145; green, 39; blue, 39 }  ,opacity=1 ]  {$a_{23}^{2}$};
\draw (400.33,202.79) node [anchor=north west][inner sep=0.75pt]  [font=\scriptsize,color={rgb, 255:red, 145; green, 39; blue, 39 }  ,opacity=1 ]  {$-a_{54}^{2}$};

\end{tikzpicture}

\caption{$\mathcal{G}_2(\mathcal{A}_2,\mathcal{B})$}
\label{mul-2}
\end{subfigure}

\vspace{2mm}

\begin{subfigure}{0.21\textwidth}
\centering

\tikzset{every picture/.style={scale=0.75pt}} 

\begin{tikzpicture}[x=0.75pt,y=0.75pt,yscale=-1,xscale=1]

\draw  [fill={rgb, 255:red, 0; green, 147; blue, 0 }  ,fill opacity=1 ][line width=0.75]  (282.96,100.49) .. controls (282.96,96.9) and (285.89,93.99) .. (289.51,93.99) .. controls (293.13,93.99) and (296.06,96.9) .. (296.06,100.49) .. controls (296.06,104.08) and (293.13,106.99) .. (289.51,106.99) .. controls (285.89,106.99) and (282.96,104.08) .. (282.96,100.49) -- cycle ;
\draw  [fill={rgb, 255:red, 28; green, 20; blue, 199 }  ,fill opacity=1 ][line width=0.75]  (238.96,139.49) .. controls (238.96,135.9) and (241.89,132.99) .. (245.51,132.99) .. controls (249.13,132.99) and (252.06,135.9) .. (252.06,139.49) .. controls (252.06,143.08) and (249.13,145.99) .. (245.51,145.99) .. controls (241.89,145.99) and (238.96,143.08) .. (238.96,139.49) -- cycle ;
\draw  [fill={rgb, 255:red, 28; green, 20; blue, 199 }  ,fill opacity=1 ][line width=0.75]  (328.96,138.49) .. controls (328.96,134.9) and (331.89,131.99) .. (335.51,131.99) .. controls (339.13,131.99) and (342.06,134.9) .. (342.06,138.49) .. controls (342.06,142.08) and (339.13,144.99) .. (335.51,144.99) .. controls (331.89,144.99) and (328.96,142.08) .. (328.96,138.49) -- cycle ;
\draw  [fill={rgb, 255:red, 0; green, 147; blue, 0 }  ,fill opacity=1 ][line width=0.75]  (282.96,100.49) .. controls (282.96,96.9) and (285.89,93.99) .. (289.51,93.99) .. controls (293.13,93.99) and (296.06,96.9) .. (296.06,100.49) .. controls (296.06,104.08) and (293.13,106.99) .. (289.51,106.99) .. controls (285.89,106.99) and (282.96,104.08) .. (282.96,100.49) -- cycle ;
\draw  [fill={rgb, 255:red, 28; green, 20; blue, 199 }  ,fill opacity=1 ][line width=0.75]  (238.96,139.49) .. controls (238.96,135.9) and (241.89,132.99) .. (245.51,132.99) .. controls (249.13,132.99) and (252.06,135.9) .. (252.06,139.49) .. controls (252.06,143.08) and (249.13,145.99) .. (245.51,145.99) .. controls (241.89,145.99) and (238.96,143.08) .. (238.96,139.49) -- cycle ;
\draw  [fill={rgb, 255:red, 28; green, 20; blue, 199 }  ,fill opacity=1 ][line width=0.75]  (328.96,138.49) .. controls (328.96,134.9) and (331.89,131.99) .. (335.51,131.99) .. controls (339.13,131.99) and (342.06,134.9) .. (342.06,138.49) .. controls (342.06,142.08) and (339.13,144.99) .. (335.51,144.99) .. controls (331.89,144.99) and (328.96,142.08) .. (328.96,138.49) -- cycle ;
\draw  [fill={rgb, 255:red, 28; green, 20; blue, 199 }  ,fill opacity=1 ][line width=0.75]  (239.96,190.49) .. controls (239.96,186.9) and (242.89,183.99) .. (246.51,183.99) .. controls (250.13,183.99) and (253.06,186.9) .. (253.06,190.49) .. controls (253.06,194.08) and (250.13,196.99) .. (246.51,196.99) .. controls (242.89,196.99) and (239.96,194.08) .. (239.96,190.49) -- cycle ;
\draw  [fill={rgb, 255:red, 28; green, 20; blue, 199 }  ,fill opacity=1 ][line width=0.75]  (329.96,189.49) .. controls (329.96,185.9) and (332.89,182.99) .. (336.51,182.99) .. controls (340.13,182.99) and (343.06,185.9) .. (343.06,189.49) .. controls (343.06,193.08) and (340.13,195.99) .. (336.51,195.99) .. controls (332.89,195.99) and (329.96,193.08) .. (329.96,189.49) -- cycle ;
\draw [color={rgb, 255:red, 18; green, 141; blue, 113 }  ,draw opacity=1 ]   (245.76,151.99) -- (245.06,175.98) ;
\draw [shift={(244.97,178.98)}, rotate = 271.69] [fill={rgb, 255:red, 18; green, 141; blue, 113 }  ,fill opacity=1 ][line width=0.08]  [draw opacity=0] (10.72,-5.15) -- (0,0) -- (10.72,5.15) -- (7.12,0) -- cycle    ;
\draw [color={rgb, 255:red, 18; green, 141; blue, 113 }  ,draw opacity=1 ]   (260.23,150.47) -- (321.36,181.31) ;
\draw [shift={(324.03,182.67)}, rotate = 206.78] [fill={rgb, 255:red, 18; green, 141; blue, 113 }  ,fill opacity=1 ][line width=0.08]  [draw opacity=0] (10.72,-5.15) -- (0,0) -- (10.72,5.15) -- (7.12,0) -- cycle    ;

\draw (224.87,115.6) node [anchor=north west][inner sep=0.75pt]    {$2$};
\draw (349.07,117.2) node [anchor=north west][inner sep=0.75pt]    {$3$};
\draw (266.27,91.4) node [anchor=north west][inner sep=0.75pt]    {$1$};
\draw (220.67,191.4) node [anchor=north west][inner sep=0.75pt]    {$4$};
\draw (353.67,192.4) node [anchor=north west][inner sep=0.75pt]    {$5$};
\draw (208,152.41) node [anchor=north west][inner sep=0.75pt]  [font=\scriptsize,color={rgb, 255:red, 18; green, 141; blue, 113 }  ,opacity=1 ]  {$a_{42}^{3}$};
\draw (279.65,137.59) node [anchor=north west][inner sep=0.75pt]  [font=\scriptsize,color={rgb, 255:red, 18; green, 141; blue, 113 }  ,opacity=1 ]  {$a_{52}^{3}$};

\end{tikzpicture}

\caption{$\mathcal{G}_3(\mathcal{A}_3,\mathcal{B})$}
\label{mul-3}
\end{subfigure}
\begin{subfigure}{0.21\textwidth}
\centering

\tikzset{every picture/.style={scale=0.9pt}} 

\begin{tikzpicture}[x=0.75pt,y=0.75pt,yscale=-1,xscale=1]

\draw  [draw opacity=0][fill={rgb, 255:red, 241; green, 229; blue, 88 }  ,fill opacity=0.63 ][dash pattern={on 4.5pt off 4.5pt}] (327.11,85.95) .. controls (288.82,64.14) and (297.97,41) .. (316.82,50.81) .. controls (335.68,60.62) and (362.54,77.47) .. (367.68,97.09) .. controls (372.82,116.71) and (321.3,97.35) .. (320.25,103.66) .. controls (319.2,109.98) and (365.58,134.97) .. (371.11,149.38) .. controls (376.64,163.79) and (359.08,175.71) .. (350.67,164.54) .. controls (342.25,153.38) and (343.39,134.43) .. (327.39,122.43) .. controls (311.39,110.43) and (280.25,107.95) .. (271,105.63) .. controls (261.75,103.3) and (283.68,154.52) .. (276.49,162.59) .. controls (269.31,170.66) and (260.7,159.9) .. (259.39,156.81) .. controls (258.09,153.71) and (249.97,119.66) .. (255.68,94.81) .. controls (261.39,69.95) and (365.39,107.76) .. (327.11,85.95) -- cycle ;
\draw [color={rgb, 255:red, 18; green, 141; blue, 113 }  ,draw opacity=1 ]   (264.54,107.95) -- (266.22,132.64) ;
\draw [shift={(266.43,135.63)}, rotate = 266.09] [fill={rgb, 255:red, 18; green, 141; blue, 113 }  ,fill opacity=1 ][line width=0.08]  [draw opacity=0] (10.72,-5.15) -- (0,0) -- (10.72,5.15) -- (7.12,0) -- cycle    ;
\draw [color={rgb, 255:red, 145; green, 39; blue, 39 }  ,draw opacity=1 ]   (341.43,97.11) -- (283.03,97.13) ;
\draw [shift={(280.03,97.13)}, rotate = 359.99] [fill={rgb, 255:red, 145; green, 39; blue, 39 }  ,fill opacity=1 ][line width=0.08]  [draw opacity=0] (10.72,-5.15) -- (0,0) -- (10.72,5.15) -- (7.12,0) -- cycle    ;
\draw [color={rgb, 255:red, 145; green, 39; blue, 39 }  ,draw opacity=1 ]   (341.33,150.13) -- (280.33,150.13) ;
\draw [shift={(344.33,150.13)}, rotate = 180] [fill={rgb, 255:red, 145; green, 39; blue, 39 }  ,fill opacity=1 ][line width=0.08]  [draw opacity=0] (10.72,-5.15) -- (0,0) -- (10.72,5.15) -- (7.12,0) -- cycle    ;
\draw  [fill={rgb, 255:red, 0; green, 147; blue, 0 }  ,fill opacity=1 ][line width=0.75]  (303.66,58.61) .. controls (303.66,55.02) and (306.59,52.11) .. (310.21,52.11) .. controls (313.83,52.11) and (316.76,55.02) .. (316.76,58.61) .. controls (316.76,62.2) and (313.83,65.11) .. (310.21,65.11) .. controls (306.59,65.11) and (303.66,62.2) .. (303.66,58.61) -- cycle ;
\draw  [fill={rgb, 255:red, 28; green, 20; blue, 199 }  ,fill opacity=1 ][line width=0.75]  (259.66,97.61) .. controls (259.66,94.02) and (262.59,91.11) .. (266.21,91.11) .. controls (269.83,91.11) and (272.76,94.02) .. (272.76,97.61) .. controls (272.76,101.2) and (269.83,104.11) .. (266.21,104.11) .. controls (262.59,104.11) and (259.66,101.2) .. (259.66,97.61) -- cycle ;
\draw  [fill={rgb, 255:red, 28; green, 20; blue, 199 }  ,fill opacity=1 ][line width=0.75]  (260.66,150.61) .. controls (260.66,147.02) and (263.59,144.11) .. (267.21,144.11) .. controls (270.83,144.11) and (273.76,147.02) .. (273.76,150.61) .. controls (273.76,154.2) and (270.83,157.11) .. (267.21,157.11) .. controls (263.59,157.11) and (260.66,154.2) .. (260.66,150.61) -- cycle ;
\draw  [fill={rgb, 255:red, 28; green, 20; blue, 199 }  ,fill opacity=1 ][line width=0.75]  (349.66,96.61) .. controls (349.66,93.02) and (352.59,90.11) .. (356.21,90.11) .. controls (359.83,90.11) and (362.76,93.02) .. (362.76,96.61) .. controls (362.76,100.2) and (359.83,103.11) .. (356.21,103.11) .. controls (352.59,103.11) and (349.66,100.2) .. (349.66,96.61) -- cycle ;
\draw  [fill={rgb, 255:red, 28; green, 20; blue, 199 }  ,fill opacity=1 ][line width=0.75]  (350.66,149.61) .. controls (350.66,146.02) and (353.59,143.11) .. (357.21,143.11) .. controls (360.83,143.11) and (363.76,146.02) .. (363.76,149.61) .. controls (363.76,153.2) and (360.83,156.11) .. (357.21,156.11) .. controls (353.59,156.11) and (350.66,153.2) .. (350.66,149.61) -- cycle ;
\draw    (303.76,66.11) -- (275.06,90.19) ;
\draw [shift={(272.76,92.11)}, rotate = 320.01] [fill={rgb, 255:red, 0; green, 0; blue, 0 }  ][line width=0.08]  [draw opacity=0] (10.72,-5.15) -- (0,0) -- (10.72,5.15) -- (7.12,0) -- cycle    ;
\draw    (319.57,65.93) -- (346.39,86.5) ;
\draw [shift={(348.77,88.33)}, rotate = 217.49] [fill={rgb, 255:red, 0; green, 0; blue, 0 }  ][line width=0.08]  [draw opacity=0] (10.72,-5.15) -- (0,0) -- (10.72,5.15) -- (7.12,0) -- cycle    ;
\draw    (253.11,102.52) .. controls (240.64,112.5) and (232.47,128.12) .. (252.85,146.78) ;
\draw [shift={(254.82,148.52)}, rotate = 220.36] [fill={rgb, 255:red, 0; green, 0; blue, 0 }  ][line width=0.08]  [draw opacity=0] (10.72,-5.15) -- (0,0) -- (10.72,5.15) -- (7.12,0) -- cycle    ;
\draw [color={rgb, 255:red, 18; green, 141; blue, 113 }  ,draw opacity=1 ]   (274.53,102.99) .. controls (291.14,111.81) and (310.34,89.67) .. (349.85,135.8) ;
\draw [shift={(351.67,137.96)}, rotate = 230.24] [fill={rgb, 255:red, 18; green, 141; blue, 113 }  ,fill opacity=1 ][line width=0.08]  [draw opacity=0] (10.72,-5.15) -- (0,0) -- (10.72,5.15) -- (7.12,0) -- cycle    ;
\draw    (273.11,108.81) .. controls (310.87,125.88) and (283.38,135.5) .. (341.93,141.68) ;
\draw [shift={(344.67,141.96)}, rotate = 185.67] [fill={rgb, 255:red, 0; green, 0; blue, 0 }  ][line width=0.08]  [draw opacity=0] (10.72,-5.15) -- (0,0) -- (10.72,5.15) -- (7.12,0) -- cycle    ;

\draw (271.59,116.87) node [anchor=north west][inner sep=0.75pt]  [font=\scriptsize,color={rgb, 255:red, 18; green, 141; blue, 113 }  ,opacity=1 ]  {$a_{42}^{3}$};
\draw (346.05,110.87) node [anchor=north west][inner sep=0.75pt]  [font=\scriptsize,color={rgb, 255:red, 18; green, 141; blue, 113 }  ,opacity=1 ]  {$a_{52}^{3}$};
\draw (297.83,74.46) node [anchor=north west][inner sep=0.75pt]  [font=\scriptsize,color={rgb, 255:red, 145; green, 39; blue, 39 }  ,opacity=1 ]  {$a_{23}^{2}$};
\draw (297.6,153.54) node [anchor=north west][inner sep=0.75pt]  [font=\scriptsize,color={rgb, 255:red, 145; green, 39; blue, 39 }  ,opacity=1 ]  {$-a_{54}^{2}$};
\draw (289.63,42.86) node [anchor=north west][inner sep=0.75pt]    {$1$};
\draw (243.57,77.06) node [anchor=north west][inner sep=0.75pt]    {$2$};
\draw (363.1,70.99) node [anchor=north west][inner sep=0.75pt]    {$3$};
\draw (247.7,157.53) node [anchor=north west][inner sep=0.75pt]    {$4$};
\draw (369.37,156.53) node [anchor=north west][inner sep=0.75pt]    {$5$};
\draw (270.86,61.64) node [anchor=north west][inner sep=0.75pt]  [font=\scriptsize]  {$a_{21}^{1}$};
\draw (335.25,55.13) node [anchor=north west][inner sep=0.75pt]  [font=\scriptsize]  {$-a_{31}^{1}$};
\draw (218.61,115.65) node [anchor=north west][inner sep=0.75pt]  [font=\scriptsize]  {$a_{42}^{1}$};
\draw (300.33,120.94) node [anchor=north west][inner sep=0.75pt]  [font=\scriptsize]  {$-a_{52}^{1}$};

\end{tikzpicture}

\caption{$\mathcal{G}_\mathfrak{U}(\mathcal{A}_i,\mathcal{B})$}
\label{mul-tog}
\end{subfigure}

\caption{The union graph $\mathcal{G}_{\mathfrak{U}}(\mathcal{A},\mathcal{B})$ (Fig.~\subref{mul-tog}) is the multiset union of $\mathcal{G}_1$-$\mathcal{G}_3$ (Figs.~\subref{mul-1}-\subref{mul-3}), preserving node sets, edge weights, and multiplicities.}

\label{mul-full}
\end{figure}   

\vspace{-3mm}

\subsection{$\pi$-Graph and Conditions for $\mathcal{SS}$ Herdability}
\begin{definition}[$\pi$-graph]\label{pi-def}
Consider the temporal network $(\mathcal{A}_i,\mathcal{B})_{i=1}^N$, where each pair $(\mathcal{A}_i,\mathcal{B})$ represents the system at snapshot $i$, with associated digraphs $\mathcal{G}_i(\mathcal{A}_i,\mathcal{B})$. Let
\(\mathcal{G}_{\mathfrak{U}}(\mathcal{A},\mathcal{B})\)
denote the corresponding union multigraph. A $\pi$-graph is a spanning subgraph of $\mathcal{G}_{\mathfrak{U}}(\mathcal{A},\mathcal{B})$ such that:
\begin{itemize}
    \item it is input-connected, and
    \item \(\exists\,\sigma\in\{+1,-1\}:
\operatorname{sgn}(P)=\sigma,
\forall v\in\mathcal{V}_{\pi},\ 
\forall P\in\mathcal{P}_{\pi}(1,v).
\), where \(\mathcal V_\pi\) is the set of nodes in the \(\pi\)- graph and \(\mathcal P_\pi(1,v)\) is the set of all walks (or paths ) from the leader node 1 to \(v\).
\end{itemize}
\end{definition}
\vspace{-2mm}
\begin{definition}[Temporal $\pi$-graph]\label{Temp-pi-def}
A temporal $\pi$-graph, denoted by $\pi_p$, is a $\pi$-graph in which every node is reachable from the leader node via temporal walks that preserve a consistent path sign. The $\pi_p$-graph evolves over the sequence of time intervals $\{T_i\}_{i=1}^N$.
\end{definition}
For example, in Fig.~\ref{mul-tog}, the $\pi_p$-graph is highlighted, where all paths from the leader to every node have the same path sign. This property leads to the following result.
\begin{proposition}\label{prop-pi}
 
Let $\pi_p$- graph be a temporally evolving graph defined by Definition \ref{Temp-pi-def}, that spans all nodes. Let $\mathcal{C}_{\pi} \in \mathbb{R}^{n\times nN}$, denote the associated controllability matrix. Then, $\mathcal{C}_{\pi}$ admits a strictly positive image, i.e.,\(\exists\, \bm{v} \in \mathbb{R}^{nN} \quad \text{such that} \quad \mathcal{C}_{\pi}\cdot\bm{v} \in \mathbb{R}^{n}_{>0}.\) Consequently, the system is $\mathcal{SS}$ herdable.
\end{proposition}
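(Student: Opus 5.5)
The plan is to construct a realization of the temporal network whose edges outside the $\pi_p$-graph are made negligible, and then show that the controllability matrix of this realization has a unisigned column with a nonzero entry in every row. Proposition~\ref{prop:uni} then gives herdability of that realization, and hence $\mathcal{SS}$ herdability by definition.

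\textbf{Step 1 (the $\pi_p$-graph alone).} Work first with the system whose snapshot matrices $\mathcal{A}_i^{\pi}$ keep only the edges of the $\pi_p$-graph that belong to snapshot $i$. Look at the first column of the first block of $\mathcal{C}_\pi$,
\[
\mathcal{C}_{\pi(:,1)}=e^{\mathcal{A}^{\pi}_N T_N}\cdots e^{\mathcal{A}^{\pi}_2 T_2}\,\mathcal{B}.
\]
Expand each exponential by its Taylor series, as in the proof of the preceding theorem. Each term has the form $\prod_k \frac{T_k^{m_k}}{m_k!}\,[\mathcal{A}_N^{m_N}\cdots\mathcal{A}_{2}^{m_{2}}\mathcal{B}]_v$, and by Remark~\ref{Remark-path} it equals a positive time factor times the path product of a temporal walk from node $1$ to $v$. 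By Definition~\ref{Temp-pi-def}, every such walk has sign $\sigma$, so every term has sign $\sigma$ and no cancellation can occur. Because the $\pi_p$-graph spans all nodes and is input-connected through temporal walks, each $v$ has at least one such walk, so the entry is nonzero.

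If needed, I would combine this column with the first columns of the other blocks, $\bigl(\prod_{k>p} e^{\mathcal{A}_kT_k}\bigr)\mathcal{B}$, using coefficient $\sigma$ on each. This handles nodes first reached via the leader in a later snapshot, and the combination is still a positive sum of nonnegative vectors. Choosing $\bm v$ with these $\sigma$-entries and zeros elsewhere gives $\mathcal{C}_\pi\bm v\in\mathbb{R}^n_{>0}$. This is exactly the unisigned-column argument of Proposition~\ref{prop:uni}.

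\textbf{Step 2 (the remaining edges).} Next I would pass from $\mathcal{C}_\pi$ to the full network. Give every edge outside the $\pi_p$-graph a weight $\epsilon>0$ in magnitude, with its prescribed sign. The controllability matrix $\mathcal{C}(t_f)$ depends continuously on the edge weights, so $\mathcal{C}(t_f)\bm v\to\mathcal{C}_\pi\bm v$ as $\epsilon\to 0$. Since the positive orthant is open, for $\epsilon$ small enough $\mathcal{C}(t_f)\bm v>0$. By Corollary~\ref{corollary1}, this realization is herdable, and it has the prescribed zero--nonzero and sign patterns. Therefore the system is $\mathcal{SS}$ herdable.

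\textbf{Main obstacle.} The delicate step is justifying that no cancellation occurs in Step 1. Concretely, every term of the Taylor/walk expansion of the relevant columns must correspond to a temporal walk inside the $\pi_p$-graph with consistent sign. This includes walks that revisit nodes, since the definition quantifies over walks. I would state this carefully, with the time factors $T_k^{m_k}/m_k!$ all positive, and fall back on restricting to the first column per block if higher-order columns $\mathcal{A}_p^{j}\mathcal{B}$ introduce walks starting at non-leader nodes.
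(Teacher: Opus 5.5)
Your Step~1 looks at the wrong columns, and as a result it fails on the paper's own examples. The first column of block $j$ is $\bigl(\prod_{k=j+1}^{N} e^{\mathcal{A}_kT_k}\bigr)\mathcal{B}_j$, which contains no edge of snapshot $j$. In particular, no first column of $\mathcal{C}_\pi$ contains any edge of $\mathcal{G}_1$. So the inference ``each $v$ has at least one such walk, so the entry is nonzero'' fails whenever the $\pi_p$-graph's temporal walks to $v$ begin with an edge of $\mathcal{G}_1$, which is the typical situation. In the example of Fig.~\ref{mul-full}, the leader has out-edges only in $\mathcal{G}_1$, so $e^{\mathcal{A}_3T_3}e^{\mathcal{A}_2T_2}\mathcal{B}=e_1$ and every first column is a multiple of $e_1$. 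The same is visible in the matrix displayed for Fig.~\ref{ega}, where all follower entries sit in the second column of a block. Your $\mathcal{C}_\pi\bm v$ is then zero on every follower row, and adding the other first columns cannot help. Your fallback goes in the wrong direction. The columns $\mathcal{A}_j^{m}\mathcal{B}$ do not introduce walks from non-leader nodes; they encode exactly the temporal walks from the leader whose first $m$ edges lie in snapshot $j$. This is why the paper's argument, via Remark~\ref{pi-col}, works with the second column $pn+2$ of a block.

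To repair Step~1, use the columns $\bigl(\prod_{k=j+1}^{N}e^{\mathcal{A}^{\pi}_kT_k}\bigr)(\mathcal{A}^{\pi}_j)^{m}\mathcal{B}_j$ for $1\le m\le n-1$. The Taylor expansion writes each entry as $b_j$ times a positively weighted sum of path products of temporal walks from the leader, so every nonzero entry has sign $\sigma\operatorname{sgn}(b_j)$ and no cancellation occurs. Every follower $v$ appears in one of these columns. A shortest temporal walk to $v$ is a path, since deleting a closed sub-walk keeps the snapshot indices nondecreasing. Take $j$ to be the snapshot of its first edge and $m\le n-1$ the number of its edges in snapshot $j$; those edges form the initial segment, and the remaining edges lie in later snapshots. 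Summing these columns with coefficients $\sigma\operatorname{sgn}(b_j)$ makes every follower row positive. You must also treat row $1$ separately. In your version, the zero-length walk contributes $b_j$ regardless of $\sigma$, and Definition~\ref{pi-def} allows $\sigma=-1$, so multiplying by $\sigma$ can make the leader entry negative. Adding a large multiple of $\operatorname{sgn}(b_N)$ times the column $\mathcal{C}_{(:,(N-1)n+1)}=b_Ne_1$ fixes the leader row without touching any follower row. With these changes your Step~2 continuity argument is correct. It is also a cleaner way than the paper's $|D_j|>|R_j|$ dominance step to pass from $\mathcal{C}_\pi$ to a herdable realization of the full sign pattern.
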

\begin{proof}
Since the underlying temporally evolving $\pi$-graph ensures that all temporal walks from the leader node to any node have the same path sign, it follows from Remark~\ref{lem1} that all entries of the associated controllability matrix $\mathcal{C}_{\pi}$ share a common sign. Hence, there exists a vector $\bm{v}$ such that $\mathcal{C}_{\pi}\cdot\bm{v} \in \mathbb{R}^n_{>0}$, implying that $\mathcal{C}_{\pi}$ admits a strictly positive image. Therefore, the system is $\mathcal{SS}$ herdable.
\end{proof}
\begin{remark} \label{pi-col}
Since the $\pi_p$-graph temporally span all the nodes over the sequence of snapshots, Remark~\ref{lem1} implies that $\mathcal{C}_{(:,pN+2)}$ is the column corresponding to the $\pi_p$-graph for the $p+1$ snapshot sequence. By definition of the $\pi_p$-graph, all paths contributing to this column corresponds to the followers have the same sign. Hence $\mathcal{C}_{(:,pN+2)}$ is unisigned.
\end{remark}

\begin{theorem}
Consider the temporal network $(\mathcal{A}_i,\mathcal{B})_{i=1}^N$, whose dynamics are governed by \eqref{sys1}, where each pair $(\mathcal{A}_i,\mathcal{B})$ represents the system at snapshot $i$, with associated digraphs $\mathcal{G}_i(\mathcal{A}_i,\mathcal{B})$. Let $
\mathcal{G}_{\mathfrak{U}}(\mathcal{A},\mathcal{B})$ denote the corresponding union multigraph. If $\mathcal{G}_{\mathfrak{U}}(\mathcal{A},\mathcal{B})$ has a $\pi_p$- graph spanning all the nodes across the temporal sequence, then the temporal network \eqref{sys1} is $\mathcal{SS}$ herdable.
\end{theorem}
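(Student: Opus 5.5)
The plan is to exhibit one realization consistent with the sign patterns that is herdable. That realization will be a small perturbation of the $\pi_p$-graph realization. Then I close with Proposition~\ref{prop:uni} and a continuity argument.

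First, fix the $\pi_p$-graph spanning all nodes, with its common path sign $\sigma$. Consider the realization $(\tilde{\mathcal{A}}_i,\mathcal{B})_{i=1}^N$ in which every edge of the $\pi_p$-graph keeps a unit magnitude with its prescribed sign. Every edge of $\mathcal{G}_{\mathfrak{U}}$ not in the $\pi_p$-graph gets magnitude $\epsilon>0$ with its prescribed sign. At $\epsilon=0$ the system is exactly the $\pi_p$-graph system. By Proposition~\ref{prop-pi} and Remark~\ref{pi-col}, its controllability matrix $\mathcal{C}_\pi$ has a column structure whose nonzero entries all have sign $\sigma$. Concretely, I would use the first column of each snapshot block,
\[
\Big(\textstyle\prod_{k=p+1}^{N} e^{\tilde{\mathcal{A}}_k T_k}\Big)\mathcal{B}.
\]
Its Taylor expansion is a sum over temporal walks from the leader with positive coefficients $T_k^{m}/m!$ times the path products. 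Since all such walks in the $\pi_p$-graph have sign $\sigma$, there is no cancellation, and every node reached by a temporal walk gets a nonzero entry of sign $\sigma$.

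Second, because the $\pi_p$-graph is input-connected across the temporal sequence, every node is reached by some such walk in some block. Hence every row of $\mathcal{C}_\pi$ has a nonzero entry in a unisigned column. By Proposition~\ref{prop:uni}, there is $\bm v$ with $\mathcal{C}_\pi\bm v\in\mathbb{R}^n_{>0}$.

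Third, the entries of $\mathcal{C}(t_f)$ are polynomial (hence continuous) in the edge magnitudes, via finite powers and the convergent exponential series. So $\mathcal{C}_\epsilon(t_f)\bm v\to\mathcal{C}_\pi\bm v$ as $\epsilon\to 0$. Since the positive orthant is open, $\mathcal{C}_\epsilon(t_f)\bm v>0$ for sufficiently small $\epsilon>0$. By Corollary~\ref{corollary1}, this realization is herdable. It has the full zero/nonzero and sign pattern of the original temporal network, so the network is $\mathcal{SS}$ herdable.

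The main obstacle I expect is the first step: justifying rigorously that no cancellation occurs in the $\pi_p$ columns. Walks within one snapshot and walks spanning several snapshots both enter the exponential products, and the definition of $\pi_p$ must be read as requiring the same sign for \emph{all} these temporal walks, including repeated-edge walks. I would handle this by restricting attention to walks respecting the snapshot order $i_1\le\cdots\le i_L$, which are exactly the terms $\tilde{\mathcal{A}}_N^{m_N}\cdots\tilde{\mathcal{A}}_{p+1}^{m_{p+1}}\mathcal{B}$, and invoking Remark~\ref{Remark-path}. The perturbation step is routine once this is in place, and it avoids needing to control the signs of non-$\pi_p$ walks at all.
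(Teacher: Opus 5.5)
Your overall plan is sound: send every non-$\pi_p$ weight to $\epsilon$, find $\bm v$ with $\mathcal{C}_\pi\bm v\in\mathbb{R}^n_{>0}$ for the pure $\pi_p$ system, and pass to small $\epsilon>0$ by continuity and openness of the orthant. It is in fact a more careful version of the paper's argument. The paper enlarges the $\pi_p$ weights until $|D_j|>|R_j|$ in the single column $pN+2$. That leaves implicit how walks mixing $\pi_p$ and non-$\pi_p$ edges are controlled, since these also grow with the $\pi_p$ weights, and whether one column reaches every node. Your limit, with the $\pi_p$ weights held fixed, removes every walk containing a non-$\pi_p$ edge.

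The gap is in your first two steps. By \eqref{C-temp}, the first column of the block of $C_j$ is $e^{\mathcal{A}_NT_N}\cdots e^{\mathcal{A}_{j+1}T_{j+1}}\mathcal{B}$. Its expansion contains only temporal walks that use snapshots $j+1,\dots,N$. No first column ever involves $\mathcal{A}_1$, and the vector $\prod_{k=1}^{N}e^{\mathcal{A}_kT_k}\mathcal{B}$ that your formula gives for $p=0$ is not a column of $\mathcal{C}(t_f)$. So walks whose first edge lies in snapshot $1$ are invisible to your columns. In the paper's own example, Fig.~\ref{mul-full}, the leader has out-edges only in $\mathcal{G}_1$. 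Every first column is therefore a multiple of $e_1$, and your claim that every node is reached by such a walk in some block fails. There is also a smaller problem: these columns contain the trivial walk, i.e.\ the entry $b_j$ in the leader row. For $\sigma=-1$ they are then not unisigned, so Proposition~\ref{prop:uni} cannot be applied to them.

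Both points can be repaired within your plan. Use instead the columns $e^{\mathcal{A}_NT_N}\cdots e^{\mathcal{A}_{j+1}T_{j+1}}\mathcal{A}_j^k\mathcal{B}$ with $1\le k\le n-1$.
\begin{itemize}
\item Each such column is a positive-coefficient sum over temporal walks with exactly $k$ edges in snapshot $j$ followed by edges in snapshots $j+1,\dots,N$. In the $\pi_p$ realization its nonzero follower entries therefore all have sign $\sigma\,\mathrm{sgn}(b_j)$.
\item Every follower reached by a temporal walk appears in one of these columns: shorten the first-snapshot segment of the walk to a path of length at most $n-1$.
\item Treat the leader row separately via the column $b_Ne_1$ of $C_N$. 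It suffices to make the follower entries of $\mathcal{C}_\pi\bm v$ positive and then add a large multiple of that column.
\end{itemize}
Alternatively, $e^{\mathcal{A}_NT_N}\cdots e^{\mathcal{A}_1T_1}\mathcal{B}$ does lie in $\mathrm{Im}(\mathcal{C}(t_f))$, because $e^{\mathcal{A}_1T_1}\mathcal{B}\in\mathrm{Im}(C_1)$ by Cayley--Hamilton. It captures all temporal walks at once, and combined with $b_Ne_1$ and Corollary~\ref{corollary1} it gives the positive vector directly.

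With this repair your continuity step goes through as written. The entries of $\mathcal{C}(t_f)$ are continuous in the edge weights, though not polynomial unless the snapshot matrices are nilpotent.
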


\begin{proof}
Consider the union multigraph of the temporally evolving network and suppose that it contains a $\pi_p$-graph spanning all nodes. By Remark~\ref{pi-col}, the $\pi_p$-graph contributes to the column $\mathcal{C}_{(:,pN+2)}$ of $\mathcal{C}(t_f)$. For each node $v_j$, let \(\mathcal{C}_{(j,pN+2)}=D_j+R_j\), where $D_j$ denotes the contribution from the $\pi_p$-graph and $R_j$ denotes the aggregate contribution from all remaining paths. Since the edge weights are free parameters in the structural setting, the weights associated with the $\pi_p$-graph can be chosen with sufficiently large magnitude relative to the remaining edge weights such that \(|D_j|>|R_j|, \forall j.\)
By the definition of a $\pi_p$-graph, the contributions $D_j$ are nonzero and have the same sign for all nodes $v_j$. Hence, $\operatorname{sgn}(\mathcal{C}_{(j,pN+2)})=\operatorname{sgn}(D_j)$ for every $j$, making $\mathcal{C}_{(:,pN+2)}$ a unisigned column with nonzero follower entries. Thus, the network is herdable.
\end{proof}

The above theorem guarantees that if there exists a temporally evolving $\pi_p$-graph spanning all nodes in the multigraph, then the network is $\mathcal{SS}$ herdable. Cycles introduce a further complication: the corresponding entries of the controllability matrix become time-varying analytic functions of $\sin t$ and $\cos t$, rather than fixed scalars. Consequently, the columns of the matrix span a function space rather than $\mathbb{R}^n$, and verifying herdability requires identifying a time domain over which a strictly positive linear combination of these functions exists.

\section{Conclusions}\label{s6}
 In this letter, we studied herdability in temporally switching directed networks. Using the relationship between temporal walks and the controllability matrix, we derived necessary and sufficient conditions for herdability and showed how the magnitudes of the edge weight influences the herdability. We then investigated structural sign ($\mathcal{SS}$) herdability and introduced a sufficient graph-theoretic condition based on the union multigraph and the notion of a $\pi$-graph. The existence of a $\pi_p$-graph, which is a temporally evolving $\pi$-graph,  was shown to guarantee $\mathcal{SS}$ herdability. Future work will address extensions to temporal networks with arbitrary topologies.

\bibliographystyle{unsrt}        
\bibliography{autobib}           
\end{document}